\newcommand{\oo}{\infty}
\newcommand{\R}{\mathbb{R}}
\newcommand{\T}{\intercal}
\DeclareMathOperator{\argmin}{argmin}
\newtheorem{theorem}{Theorem}
\newtheorem{corollary}[theorem]{Corollary}
\newtheorem{proposition}[theorem]{Proposition}
\newtheorem{lemma}[theorem]{Lemma}
\begin{document}

\title{Portfolio Optimization with Expectile and Omega Functions}

\author{Alexander Wagner}
\address{Department of Mathematics, University of Florida}
\email{wagnera@ufl.edu}
\urladdr{https://people.clas.ufl.edu/wagnera/}

\author{Stan Uryasev}
\address{Department of Industrial and Systems Engineering, University of Florida}
\email{uryasev@ise.ufl.edu}
\urladdr{http://www.ise.ufl.edu/uryasev/}

\begin{abstract}

This paper proves equivalences of portfolio optimization problems with negative expectile and omega ratio. We derive subgradients for the negative expectile as a function of the portfolio from a known dual representation of expectile and general theory about subgradients of risk measures. We also give an elementary derivation of the gradient of negative expectile under some assumptions and provide an example where negative expectile is demonstrably not differentiable. We conducted a case study and solved portfolio optimization problems with negative expectile objective and constraint.
\end{abstract}

\noindent \textcopyright 2019 IEEE.  Personal use of this material is permitted.  Permission from IEEE must be obtained for all other uses, in any current or future media, including reprinting/republishing this material for advertising or promotional purposes, creating new collective works, for resale or redistribution to servers or lists, or reuse of any copyrighted component of this work in other works.\\

\maketitle

\section{Introduction}
\label{sec:intro}

In this paper, portfolio optimization is the task of maximizing the expected return of a portfolio subject to the risk of the portfolio not exceeding a prespecified level. Formally, let $\xi = (\xi_1,\dots, \xi_n)$ be a vector containing random variables representing the returns of $n$ financial instruments. Let $x = (x_1, \dots, x_n ) \in \R^n$. The random return of the portfolio is given by $\xi^\T x$. The decision variable $x_i$ represents position being invested in the $i$-th instrument. Let us denote the risk measure of interest by $R$. We investigate two single-stage stochastic optimization problems, 
\begin{align*}
	&\max E[\xi^\T x] \quad \text{subject to} \quad R(\xi^\T x) \leq b,
	\ x \in V\in \R^n \; \text{ and} \\
	&\min R(\xi^\T x) \quad \text{subject to} \quad E[\xi^\T x] \geq r,
	\ x \in V\in \R^n\;,
\end{align*}
where $V\in \R^n$ is a convex set of feasible portfolios.

This paper considers two closely related risk measures, negative expectile and omega.
Expectiles are asymmetric generalizations of expected value introduced by \citeN{newey1987asymmetric} and defined by \eqref{expectile}. When $q \leq \frac{1}{2}$, negative level-$q$ expectiles satisfy the conditions of a coherent risk measure in the sense of \shortciteN{artzner1999coherent}. Section \ref{sec:subgrad} of this paper shows that for  $q < \frac{1}{2}$ the  level-$q$ negative expectile is strictly expectation bounded (i.e. averse) as defined by \shortciteN{rockafellar2006generalized}.
 Moreover, negative expectile is elicitable, that is there exists a consistent way of comparing the quality of procedures that predict its value \cite{ziegel2016coherence}. Indeed, \shortciteN{Bellini2019} have suggested a procedure for testing the accuracy of an expectile forecasting model. A conditional version of expectile naturally generalizing conditional expectation has also been developed \shortcite{BELLINI2018117}.

In addition to this list of desirable theoretical properties, \citeN{bellini2017risk} offer the following intuitive motivation for the use of negative expectile as a measure of risk for portfolio optimization. Value at risk is a commonly used risk measure and can be given by \eqref{varacceptance} employing the concept of acceptance sets. The same reformulation can be given to negative expectile as in \eqref{nexpectileacceptance}.

\begin{align}
\text{VaR}_{\alpha}(X) = \inf \{m\in\R \ |\ P(X+m > 0) \geq \frac{1-\alpha}{\alpha}P(X+m\leq 0)\} \label{varacceptance} \\
R_q(X) = \inf \{m \in \R \ |\ E[(X+m)_+]\geq \frac{1-q}{q}E[(X+m)_-]\} \label{nexpectileacceptance}
\end{align}

So if VaR represents the smallest amount we need to add to the random return such that the \textit{probability} of a gain is $(1-\alpha)/\alpha$ times greater than the \textit{probability} of a loss, then negative expectile represents the smallest amount we need to add such that the \textit{expected} gains are $(1-q)/q$ times greater than the \textit{expected} losses. 

Property \eqref{nexpectileacceptance} of the negative expectile  has a clear relationship to the ratio of expected over-performance to expected under-performance relative to a constant benchmark $B$. This quantity is denoted $\Omega_B$ and was introduced by \citeN{keating2002universal}, 
\[
\Omega_B(X) = \frac{E[(X-B)_+]}{E[(X-B)_-]}.
\]

Choosing portfolio weights to maximize omega is a non-convex problem \shortcite{kane2009optimizing}. \shortciteN{mausser2006optimising} reduce the omega maximization problem to a linear programming problem in the case that the optimal portfolio has an expected return exceeding the benchmark. The case study ``Omega Portfolio Rebalancing" \cite{casestudyomega}
reduces omega optimization to maximization of expected return with a convex constraint on a partial moment. This case study posted data for a test problem  (provided by a mutual fund)  and codes in Text, MATLAB, and R formats. 
In other work, \shortciteN{kapsos2014optimizing} show that the omega maximization problem can be reformulated equivalently as a quasi-convex optimization problem. Section~\ref{sec:omega} shows equivalence of expectile and omega portfolio optimization problems.  The relation of expectile to omega has also been explored by \shortciteN{Bellini2018} where they define a stochastic order based on expectile and show its equivalence to the same inequality holding for omega at every benchmark.

\citeN{jakobsons2016} provides three linear programming formulations for portfolio optimization with an expectile objective when asset returns have a finite, discrete distribution. In the case of a large number of scenarios, convex programming has a significant advantage compared to linear programming. While the set of constraints for linear programming may grow infeasibly large, the convex formulation need only compute a subgradient. To this end, we deduce a formula for subgradients of negative expectile in Section \ref{sec:subgrad} by combining a known dual representation of expectile \shortcite{bellini2014generalized} and general theory about subgradients of risk measures \shortcite{rockafellar2006optimality}. Section \ref{sec:applications} provides numerical results with Portfolio Safeguard subroutines calculating negative expectile and subgradient in order to solve portfolio optimization problems with a convex programming approach. Section \ref{sec:derivative} provides an elementary derivation of the gradient of negative expectile as a function of the portfolio and ends with an example where the negative expectile is demonstrably not differentiable.

\section{Background}

This paper considers a general probability space $(\mathcal A,\mathcal F, P)$, unless specified otherwise. In some application-focused derivations, it is noted that the space is assumed to be finite. Inequalities between random variables are meant almost surely.
Expectiles are generalizations of expected value introduced by \citeN{newey1987asymmetric}. 
Let $X$ belong to the space $\mathcal L^2(\mathcal A)$ of random variables with finite second moment.
The level-$q$ expectile of $X$ is denoted $e_q(X)$ and defined according to \eqref{expectile}, where $x_+ = \max(x,0)$ and $x_- = \max(-x,0)$.
 
 \begin{equation}
 \label{expectile}
 e_q(X) = \argmin_{m\in\R}\ qE[(X-m)_+^2] + (1-q)E[(X-m)_-^2],\ q\in(0,1)\;.
 \end{equation} 
 \citeN{newey1987asymmetric} showed that $m = e_q(X)$ is the unique solution of the following equation \eqref{focexpectile},
 
 \begin{equation}
 \label{focexpectile}
 qE[(X-m)_+] - (1-q)E[(X-m)_-] = 0\;.
 \end{equation} 
 We take this as an alternative definition of expectile that extends to $\mathcal L^1(\mathcal A)$, the space of random variables with finite first moment. For calculating expectiles, it is enough to assume the existence of the first moment. We make use of the following lemma throughout the paper.
 
\begin{lemma}
\label{monoexpectile}
	If $0 < q < 1$, then $qE[(X-m)_+] - (1-q)E[(X-m)_-]$ is strictly decreasing in $m$.
\end{lemma}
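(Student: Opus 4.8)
The plan is to prove the statement pointwise on the sample space and then integrate. Writing $f(m) = qE[(X-m)_+] - (1-q)E[(X-m)_-]$ as the expectation of the integrand $g(m) = q(X-m)_+ - (1-q)(X-m)_-$, a random variable depending on the parameter $m$, it suffices by monotonicity of the expectation to show that $g$ decreases in $m$ by at least a fixed positive amount at \emph{every} outcome, and then to pass to expectations. Here $X$ is taken in $\mathcal{L}^1$, the standing assumption under which the defining equation \eqref{focexpectile} makes sense.

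The key observation is that, for each fixed outcome, $g$ is a piecewise linear function of $m$ with a single kink at $m = X$. On the region where $X > m$ one has $(X-m)_+ = X-m$ and $(X-m)_- = 0$, so $g = q(X-m)$, which decreases in $m$ with slope $-q$. On the region where $X < m$ one has $(X-m)_+ = 0$ and $(X-m)_- = m-X$, so $g = -(1-q)(m-X)$, which decreases with slope $-(1-q)$; the two pieces agree (both equal $0$) at the kink, so $g$ is continuous. Because $0 < q < 1$, both slopes are strictly negative and are bounded above by $-\min(q,1-q) < 0$.

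From this uniform slope bound I would conclude that for any $m_1 < m_2$ and every outcome, $g(m_1) - g(m_2) \ge \min(q,1-q)\,(m_2 - m_1) > 0$, since the decrease of $g$ across $[m_1,m_2]$ is the integral of a slope that never exceeds $-\min(q,1-q)$, regardless of whether $m_1$ and $m_2$ straddle the kink $X$. The same reasoning gives the upper bound $g(m_1)-g(m_2) \le \max(q,1-q)\,(m_2-m_1)$, so the random variable $g(m_1)-g(m_2)$ is bounded and hence integrable. Taking expectations preserves the lower inequality, yielding $f(m_1) - f(m_2) \ge \min(q,1-q)\,(m_2-m_1) > 0$, which is exactly strict monotonicity.

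The only genuine subtlety—and the reason a one-line ``each term is monotone'' argument does not suffice—is obtaining \emph{strict} decrease rather than mere non-increase: the summand $qE[(X-m)_+]$ by itself is only non-increasing (it is flat on the set where $X < m$), and similarly the other summand is flat where $X > m$, so neither term alone forces strictness. The point to nail down carefully is therefore the uniform negative-slope bound $-\min(q,1-q)$, which holds on both linear pieces simultaneously and is what upgrades monotonicity to strict monotonicity.
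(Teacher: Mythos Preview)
Your proof is correct but takes a genuinely different route from the paper's. The paper argues more tersely: for $m < n$ one has $E[(X-m)_+] \geq E[(X-n)_+]$ and $E[(X-m)_-] \leq E[(X-n)_-]$; if both were equalities, then the identity $a = a_+ - a_-$ would give $E[X-m] = E[X-n]$, a contradiction, so at least one inequality is strict and positivity of $q$ and $1-q$ finishes the job. You instead bound the pointwise slope of $m \mapsto q(X-m)_+ - (1-q)(X-m)_-$ uniformly by $-\min(q,1-q)$ and integrate. The paper's argument is shorter and avoids any pointwise analysis, exploiting the algebraic identity to force strictness at the level of expectations. Your argument is more quantitative: it actually yields the two-sided Lipschitz estimate $\min(q,1-q)\,|m-n| \leq |f(m)-f(n)| \leq \max(q,1-q)\,|m-n|$, which is strictly more information than strict monotonicity and immediately gives, for instance, continuity of $f$ and a modulus for root-finding when solving \eqref{focexpectile}.
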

 \begin{proof}
 	If $m < n$, then $(X-m) > (X-n)$ which implies $E[(X-m)_+] \geq E[(X-n)_+]$ and $E[(X-m)_-] \leq E[(X-n)_-]$.
If both inequalities are equalities then
\[
E[X-m] = E[(X-m)_+] - E[(X-m)_-] = E[(X-n)_+] - E[(X-n)_-] = E[X-n]
\]
This is a contradiction, so either $E[(X-m)_+] > E[(X-n)_+]$ or $E[(X-m)_-] < E[(X-n)_-]$, from which the result follows since $q, (1-q) > 0$.
 \end{proof}

We define $R_q(X) = -e_q(X)$ for some fixed $q$. \shortciteN{bellini2014generalized} have shown that for $q \leq \frac{1}{2}$, $R_q(X)$ satisfies the following axioms of a coherent risk measure. 

\begin{enumerate}
	\item $R_q(X + C) = R_q(X) - C$ for all $X$ and constants $C$,
	\item $R_q(0) = 0$, and $R_q(\lambda X) = \lambda R_q(X)$ for all $X$ and $\lambda > 0$,
	\item $R_q(X + Y) \leq R_q(X) + R_q(Y)$ for all $X,Y$,
	\item $R_q(X) \leq R_q(Y)$ when $X \geq Y$ almost surely.
\end{enumerate}

 In this paper, we consider $X = \xi^\T x$, where  $\xi = (\xi_1,\dots, \xi_n)$ is a vector containing random variables representing the returns of $n$ financial instruments and the decision variable $x_i$ represents position being invested in the $i$-th instrument. We denote by $V\in \R^n$ a convex set of feasible portfolios. Section~\ref{sec:applications} presents the Case Study with
 \[
 V=\{x\,|\, x\geq 0,\ 1^\T x = 1,\, x\in \R^n\}\,.
 \] 
 Let us denote $\rho_q(x) = R_q(\xi^\T x)$ and $\phi_B(x) = \Omega_B(\xi^\T x)$. We are interested in the following portfolio optimization problems,
 \begin{alignat*}{4}
(P1,\rho_q,b)& \quad \quad \quad \max E[\xi^\T x]& \quad &\text{ s.t.}\quad \rho_q(x) \leq b,\ x\in V& \\
(P1,\phi_B,v)& \quad \quad \quad \max E[\xi^\T x]& \quad &\text{ s.t.} \quad \phi_B(x) \geq v,\ x\in V& \\
(P2,\rho_q,r)& \quad \quad \quad \min \rho_q(x)& \quad &\text{ s.t.} \quad E[\xi^\T x] \geq r,\ x\in V& \\
(P2,\phi_B,r)& \quad \quad \quad \max \phi_B(x)& \quad &\text{ s.t.} \quad E[\xi^\T x] \geq r,\ x\in V& 
\end{alignat*}

\section{Portfolio Optimization Problems with Expectile and Omega Functions} 
\label{sec:omega}

This section shows that the portfolio optimization problems with expectile and omega functions are equivalent and generate the same optimal portfolios. Expectile and omega are inverse functions in the sense of Proposition~\ref{prop:omegaexpectile}; see also $(6)$ in \shortcite{Bellini2018}.

In this section, we restrict our attention to $z > 1$ because in this case $R_{(1+z)^{-1}}$ is a coherent, strictly expectation bounded risk measure. When $E[(X-B)_+] > E[(X-B)_-] = 0$, we define $\Omega_B(X) = \oo$.
\begin{proposition}
	Suppose $1 < z < \oo$. Then
	\[
	\Omega_B(X) = z \iff R_{(1+z)^{-1}}(X) = -B \text{ and } X \neq B\;.
	\]
	\label{prop:omegaexpectile}
\end{proposition}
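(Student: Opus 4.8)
The plan is to reduce both sides of the equivalence to the single algebraic identity $E[(X-B)_+] = z\,E[(X-B)_-]$, using the substitution $q = (1+z)^{-1}$ as the bridge. First I would record the key arithmetic fact that for $q = (1+z)^{-1}$ one has $1-q = z/(1+z)$, so that $\frac{1-q}{q} = z$, equivalently $qz = 1-q$. This is exactly what links the omega ratio to the expectile first-order condition \eqref{focexpectile}.

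Next I would rewrite the negative-expectile condition in terms of first moments. Since $R_q(X) = -e_q(X)$, the statement $R_{(1+z)^{-1}}(X) = -B$ is the same as $e_q(X) = B$, and by the characterization \eqref{focexpectile}, whose root is unique by Lemma~\ref{monoexpectile}, this holds if and only if $qE[(X-B)_+] - (1-q)E[(X-B)_-] = 0$. Dividing by $q > 0$ and using $\frac{1-q}{q} = z$ turns this into precisely $E[(X-B)_+] = z\,E[(X-B)_-]$.

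For the forward direction I would start from $\Omega_B(X) = z$. Because $z$ is finite, the definition forces $E[(X-B)_-] \neq 0$, since a vanishing denominator with positive numerator is assigned the value $\oo$ and the $0/0$ case leaves $\Omega_B$ undefined; in particular $X \neq B$. Clearing the denominator in $\frac{E[(X-B)_+]}{E[(X-B)_-]} = z$ yields the algebraic identity above, hence $R_{(1+z)^{-1}}(X) = -B$ together with $X \neq B$. For the converse I would assume $R_{(1+z)^{-1}}(X) = -B$ and $X \neq B$, obtain the algebraic identity from the previous paragraph, and then must recover a \emph{finite} omega ratio by establishing that the denominator is strictly positive.

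The one genuinely delicate point — and the reason the hypothesis $X \neq B$ is present — is ruling out the degenerate case $E[(X-B)_-] = 0$ in the converse. The argument I expect to use is: if $E[(X-B)_-] = 0$, then the algebraic identity forces $E[(X-B)_+] = z\cdot 0 = 0$ as well, so both $(X-B)_+$ and $(X-B)_-$ vanish almost surely, i.e. $X = B$, contradicting the hypothesis; hence $E[(X-B)_-] > 0$ and I may divide to conclude $\Omega_B(X) = z$. This same degeneracy shows why $X \neq B$ cannot be dropped, since the constant $X = B$ satisfies $e_q(B) = B$, so that $R_{(1+z)^{-1}}(X) = -B$ holds while $\Omega_B(X)$ is undefined.
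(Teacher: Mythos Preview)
Your proposal is correct and follows essentially the same route as the paper: set $q=(1+z)^{-1}$ so that $z=(1-q)/q$, then pass between $\Omega_B(X)=z$ and the first-order condition \eqref{focexpectile} at $m=B$, carrying the side condition $X\neq B$. Your treatment is in fact more careful than the paper's terse chain of equivalences, since you explicitly justify why $E[(X-B)_-]>0$ in both directions and explain the role of the hypothesis $X\neq B$ in ruling out the $0/0$ degeneracy.
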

\begin{proof}
	Define $q = (1+z)^{-1}$.
	\begin{align*}
	\Omega_B(X) = \frac{E[(X-B)_+]}{E[(X-B)_-]} = z = \frac{1-q}{q}  &\iff \\
	qE[(X-B)_+] - (1-q)E[(X-B)_-] = 0, \; X\neq B &\iff \\
	e_q(X) = B, \; X\neq B	&\iff R_{(1+z)^{-1}}(X) = -B, \; X\neq B \qedhere
	\end{align*}
\end{proof}

Constraints for expectile and omega are equivalent in the following sense. 
\begin{proposition}
	Suppose $1 < z < \oo$, then 
	\[
\{X\, | \, \Omega_B(X) \geq z  \,\text{ or }\, X=B 
\}\; = \; \{X\, |\,R_{(1+z)^{-1}}(X) \leq - B \}   \;.
	\]
	\label{prop:omegaexpectile_sets}
\end{proposition}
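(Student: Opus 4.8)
The plan is to collapse both sides of the claimed set equality into a single scalar condition $g(B) \geq 0$, where $g(m) := qE[(X-m)_+] - (1-q)E[(X-m)_-]$ is the strictly decreasing function from Lemma~\ref{monoexpectile} and $q := (1+z)^{-1}$; note that $z = (1-q)/q$ and, since $z > 1$, that $q < \tfrac12$. This is the set-level analogue of the pointwise equivalence in Proposition~\ref{prop:omegaexpectile}, and the monotonicity of $g$ is the engine behind both.

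First I would rewrite the right-hand set. Since $R_q(X) = -e_q(X)$, the membership condition $R_q(X) \leq -B$ is equivalent to $e_q(X) \geq B$. By \eqref{focexpectile}, $e_q(X)$ is the unique root of $g$, so Lemma~\ref{monoexpectile} gives $e_q(X) \geq B \iff g(B) \geq g(e_q(X)) = 0$. Hence
\[
R_q(X) \leq -B \iff qE[(X-B)_+] \geq (1-q)E[(X-B)_-].
\]
Next I would show the left-hand set yields the same inequality, organizing the argument by the value of $E[(X-B)_-]$. When $E[(X-B)_-] > 0$, multiplying $\Omega_B(X) = E[(X-B)_+]/E[(X-B)_-] \geq z = (1-q)/q$ through by $qE[(X-B)_-] > 0$ reproduces exactly $g(B) \geq 0$, so membership in the two sets coincides on this regime.

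The degenerate configurations are where the care is needed, and I expect them to be the main obstacle, since the cross-multiplication above is invalid once $E[(X-B)_-] = 0$. If $E[(X-B)_-] = 0$ but $X \neq B$, then $X \geq B$ almost surely with $E[(X-B)_+] > 0$, so on one hand $\Omega_B(X) = \oo \geq z$ by the stated convention, and on the other $g(B) = qE[(X-B)_+] > 0$, placing $X$ in both sets. If instead $X = B$, then $(X-B)_\pm = 0$, so $g(B) = 0$ and $R_q(X) = -B$; this is precisely why the clause ``or $X = B$'' appears in the left set, as it matches the boundary case $R_q(X) = -B$ that the strict inequality $\Omega_B(X) \geq z$ would otherwise miss. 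Assembling the three cases shows that each of the two sets equals $\{X : g(B) \geq 0\}$, which proves the claimed equality.
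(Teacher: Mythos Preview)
Your proof is correct and follows essentially the same approach as the paper: both arguments pivot on Lemma~\ref{monoexpectile} and the first-order condition \eqref{focexpectile} to reduce the expectile side to the scalar inequality $qE[(X-B)_+]-(1-q)E[(X-B)_-]\geq 0$, and both handle the omega side by splitting into the nondegenerate case (cross-multiply), the case $E[(X-B)_-]=0$ with $X\neq B$ (where $\Omega_B=\infty$), and the case $X=B$. Your framing via the single auxiliary set $\{X:g(B)\geq 0\}$ is a mild organizational streamlining of the paper's double-inclusion argument, not a different route.
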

\begin{proof}
Define $q = (1+z)^{-1}$. We begin by showing the first set includes in the second one. If $X = B$, then by translation invariance, $R_q(X) = -B$. Suppose $\Omega_B(X) \geq z$ and $X \neq B$. If $\Omega_B(X)$ is infinite, then $E[(X-B)_+] > E[(X-B)_-] = 0$. If $\Omega_B(X)$ is finite, then
\[
\frac{E[(X-B)_+]}{E[(X-B)_-]} = \Omega_B(X) \geq z = \frac{1-q}{q}.
\]
Note that whether $\Omega_B(X)$ is finite or infinite,
\[
qE[(X-B)_+]-(1-q)E[(X-B)_-] \geq 0\;.
\]
Since $qE[(X+R_q(X))_+] - (1-q)E[(X+R_q(X))_-] = 0$ by \eqref{focexpectile}, we have $R_q(X) \leq -B$ by Lemma \ref{monoexpectile}.

We now show that the second set includes in the first one. Suppose $R_q(X) \leq -B$. Equation \eqref{prop:omegaexpectile_sets:eq1} follows from the first order condition of expectile \eqref{focexpectile} and Lemma \ref{monoexpectile}.
\begin{align}
qE[(X+R_q(X))_+] - (1-q)E[(X+R_q(X))_-] &= 0 \Longrightarrow \nonumber\\
	qE[(X-B)_+]-(1-q)E[(X-B)_-] &\geq 0 \label{prop:omegaexpectile_sets:eq1}
\end{align}
If $X=B$, the statement is shown. If $X \neq B$, either $E[(X-B)_+]$ or $E[(X-B)_-]$ is nonzero. $E[(X-B)_+] = 0$ contradicts \eqref{prop:omegaexpectile_sets:eq1}, so $E[(X-B)_+] > E[(X-B)_-]$, which in conjunction with \eqref{prop:omegaexpectile_sets:eq1} implies $\Omega_B(X) \geq z$.
\end{proof}

Let $\mathcal{X}$ be a feasible set of random variables and $X\in \mathcal{X}$. Consider the following optimization problems in the space of random variables with omega and negative expectile constraints.

 \begin{alignat*}{4}
(P1,R_{(1+z)^{-1}},-B)& \quad \quad \quad \quad \max E[X]& \quad &\text{ s.t.}\quad R_{(1+z)^{-1}}(X) \leq - B,\; X\in \mathcal{X}& \\
(P1,\Omega_B,z)& \quad \quad \quad \quad \max E[X]& \quad &\text{ s.t.} \quad \Omega_B(X) \geq z,\; X\in \mathcal{X}& 
\end{alignat*}

The following corollary shows that problems  $(P1,R_{(1+z)^{-1}},-B)$ and  
$(P1,\Omega_B,z)$ with expectile and omega constraints are equivalent.
\begin{corollary}
	Suppose $z > 1$, and there exists $X_0 \in \mathcal{X}$ such that $\Omega_B(X_0) \geq z$. Then the problems $(P1,R_{(1+z)^{-1}},-B)$ and  
$(P1,\Omega_B,z)$ are equivalent, i.e. their optimal solution sets and objective values coincide.
\label{equival}
\end{corollary}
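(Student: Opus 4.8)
The plan is to leverage Proposition~\ref{prop:omegaexpectile_sets}, which already establishes that the two feasible sets differ by at most the single point $X = B$. Writing $\mathcal{F}_R = \{X \in \mathcal{X} \mid R_{(1+z)^{-1}}(X) \leq -B\}$ and $\mathcal{F}_\Omega = \{X \in \mathcal{X} \mid \Omega_B(X) \geq z\}$ for the two feasible sets, I would intersect the set identity of Proposition~\ref{prop:omegaexpectile_sets} with $\mathcal{X}$ to obtain $\mathcal{F}_R = \mathcal{F}_\Omega \cup (\{B\} \cap \mathcal{X})$. So the only possible discrepancy between the two problems is whether the constant random variable $X = B$ is feasible, and the whole argument reduces to showing that this point never affects the optimal value or the optimal solution set.

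First I would record that the constant $X = B$ gives objective value $E[B] = B$. Next, using the hypothesized $X_0 \in \mathcal{X}$ with $\Omega_B(X_0) \geq z$ together with the assumption $z > 1$, I would show $E[X_0] > B$. Indeed, whether $\Omega_B(X_0)$ is finite or infinite, the inequality $\Omega_B(X_0) \geq z > 1$ forces $E[(X_0 - B)_+] > E[(X_0 - B)_-]$, so that $E[X_0 - B] = E[(X_0 - B)_+] - E[(X_0 - B)_-] > 0$ and hence $E[X_0] > B$. Thus the feasible point $X_0 \in \mathcal{F}_\Omega \subseteq \mathcal{F}_R$ strictly dominates the constant $B$, and in particular $B$ is never an optimal solution of either problem.

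With this domination in hand, both conclusions follow quickly. For the objective values, the inclusion $\mathcal{F}_\Omega \subseteq \mathcal{F}_R$ gives $\sup_{X \in \mathcal{F}_R} E[X] \geq \sup_{X \in \mathcal{F}_\Omega} E[X]$, while the identity $\mathcal{F}_R = \mathcal{F}_\Omega \cup (\{B\} \cap \mathcal{X})$ shows the left supremum equals $\max\bigl(\sup_{X \in \mathcal{F}_\Omega} E[X],\, B\bigr)$; since $\sup_{X \in \mathcal{F}_\Omega} E[X] \geq E[X_0] > B$, the two suprema coincide. For the optimal solution sets, any optimal $X^*$ of $(P1,R_{(1+z)^{-1}},-B)$ attains a value at least $E[X_0] > B$, so $X^* \neq B$; by the set identity $X^* \in \mathcal{F}_\Omega$, and since the optimal values agree, $X^*$ is optimal for $(P1,\Omega_B,z)$. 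The reverse inclusion is immediate from $\mathcal{F}_\Omega \subseteq \mathcal{F}_R$ and the equality of optimal values.

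I expect the only delicate point to be the treatment of the boundary case $X = B$: one must observe that $\Omega_B(B)$ is the indeterminate form $0/0$, which is precisely why Proposition~\ref{prop:omegaexpectile_sets} appends the clause ``or $X = B$'' on the omega side, and why the strict hypothesis $z > 1$ (rather than $z \geq 1$) is needed to guarantee the strict domination $E[X_0] > B$ that rules this point out.
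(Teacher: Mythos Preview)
Your proposal is correct and follows essentially the same approach as the paper: both invoke Proposition~\ref{prop:omegaexpectile_sets} to identify the feasible sets up to the single point $X=B$, then use $\Omega_B(X_0)\geq z>1$ to deduce $E[X_0]>B$ so that the constant $B$ is strictly dominated and can be discarded without affecting optima. Your write-up is simply more explicit about the bookkeeping for the optimal values and solution sets, whereas the paper compresses this into the observation that one may replace $\mathcal{X}$ by $\mathcal{X}\setminus\{B\}$ and then apply Proposition~\ref{prop:omegaexpectile_sets} directly.
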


\begin{proof}
	$\Omega_B(X_0) \geq z > 1$ implies that $E[(X_0-B)_+] > E[(X_0-B)_-]$, which in turn implies $E[X_0] > B$. By Proposition~\ref{prop:omegaexpectile_sets}, $\Omega_B(X_0) \geq z$ implies $R_{(1+z)^{-1}}(X_0) \leq -B$. Hence, for both optimization problems in question, the optimal objective value is strictly larger than $B$ and replacing $\mathcal{X}$ with $\mathcal{X} \setminus B$ does not affect the optimal solution sets or objective values. The corollary now follows directly from Proposition~\ref{prop:omegaexpectile_sets}.
\end{proof}

Further we prove two propositions showing the relation of the following optimization problems $(P2,R_{(1+z)^{-1}},r)$ and $(P2,\Omega_B,r)$ with percentile and omega objectives.
 \begin{alignat*}{4}
(P2,R_{(1+z)^{-1}},r)& \quad \quad \quad \quad \min R_{(1+z)^{-1}}(X)& \quad &\text{ s.t.} \quad E[X] \geq r,\ X\in \mathcal{X}& \\
(P2,\Omega_B,r)& \quad \quad \quad \quad \max \Omega_B(X) & \quad &\text{ s.t.} \quad E[X] \geq r,\ X\in \mathcal{X}& 
\end{alignat*}

\begin{proposition}
	Let $z > 1$ and $X_0\in \mathcal{X}$ be an optimal solution of the optimization problem 	
	$(P2,R_{(1+z)^{-1}},r)$. If $X_0$ is non-constant, then $X_0$ is an optimal solution of the optimization problem $(P2,\Omega_{-R_{(1+z)^{-1}}(X_0)},r)$.
	\label{prop:exp=>omega}
\end{proposition}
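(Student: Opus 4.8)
The plan is to set $q = (1+z)^{-1}$ and $B = -R_q(X_0)$, so that the target problem is $(P2,\Omega_B,r)$, which shares its feasible set $\{X \in \mathcal X : E[X] \ge r\}$ with $(P2,R_q,r)$. Since $X_0$ is optimal, hence feasible, for the expectile problem, it is automatically feasible for the omega problem, so the only thing to establish is that $X_0$ attains the maximum of $\Omega_B$ over this common feasible set. The first step is to compute the omega value at $X_0$: because $X_0$ is non-constant we have $X_0 \neq B$, and $R_q(X_0) = -B$ by the choice of $B$, so Proposition~\ref{prop:omegaexpectile} gives $\Omega_B(X_0) = z$. It therefore suffices to show that $\Omega_B(X) \le z$ for every feasible $X$.

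I would argue this by contradiction: suppose some feasible $X_1$ satisfies $\Omega_B(X_1) > z$, and deduce that $R_q(X_1) < -B = R_q(X_0)$, contradicting the optimality of $X_0$ for the minimization problem $(P2,R_q,r)$. The key is to upgrade the strict inequality on omega to a strict inequality on expectile. Writing $g(m) = qE[(X_1-m)_+] - (1-q)E[(X_1-m)_-]$, the strict inequality $\Omega_B(X_1) > z = (1-q)/q$ yields $g(B) > 0$ in both the finite case (clear the denominator in $E[(X_1-B)_+]/E[(X_1-B)_-] > (1-q)/q$) and the infinite case, where $E[(X_1-B)_+] > E[(X_1-B)_-] = 0$ forces $g(B) = qE[(X_1-B)_+] > 0$. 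The first-order condition \eqref{focexpectile} gives $g(e_q(X_1)) = 0$, and Lemma~\ref{monoexpectile} says $g$ is strictly decreasing, so $g(B) > 0 = g(e_q(X_1))$ forces $B < e_q(X_1) = -R_q(X_1)$, i.e.\ $R_q(X_1) < -B$. This is the desired contradiction, and it completes the argument.

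The step I expect to be the crux is precisely this passage from $\Omega_B(X_1) > z$ to the strict expectile inequality $R_q(X_1) < -B$. The set identity of Proposition~\ref{prop:omegaexpectile_sets} only delivers the non-strict conclusion $R_q(X) \le -B$ from $\Omega_B(X) \ge z$, which is too weak to contradict optimality; so I must instead work directly with the monotone function $g$ and combine the strict decrease from Lemma~\ref{monoexpectile} with the strict omega inequality. The finite/infinite dichotomy for $\Omega_B$ also has to be handled explicitly so that $g(B) > 0$ holds in both cases, and the non-constancy hypothesis is what guarantees $X_0 \neq B$ so that Proposition~\ref{prop:omegaexpectile} applies; everything else is bookkeeping about the shared feasible set.
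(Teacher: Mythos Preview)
Your proof is correct and follows essentially the same contradiction argument as the paper: set $B=-R_q(X_0)$, use Proposition~\ref{prop:omegaexpectile} to get $\Omega_B(X_0)=z$, then show any feasible $X_1$ with $\Omega_B(X_1)>z$ would satisfy $R_q(X_1)<-B$ via the monotone function $g$ and the first-order condition, splitting into the finite and infinite omega cases. The only minor difference is that in the finite case the paper introduces an auxiliary level $q'=(1+\Omega_B(X_1))^{-1}$ and passes through Proposition~\ref{prop:omegaexpectile} again to reach $g(B)>0$, whereas you obtain $g(B)>0$ directly by clearing the denominator in $\Omega_B(X_1)>(1-q)/q$; your route is slightly more streamlined but the substance is identical.
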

\begin{proof}
Define $B = -R_{(1+z)^{-1}}(X_0)$. $X_0$ is non-constant and $R_{(1+z)^{-1}}(X_0) = -B$ so by Proposition~\ref{prop:omegaexpectile}, $\Omega_B(X_0) = z$. Suppose by contradiction that there exists $X_1 \in \mathcal{X}$ such that $E[X_1] \geq r$ and $\Omega_B(X_1) > \Omega_B(X_0)$.

If $\Omega_B(X_1)$ is infinite, then $E[(X_1-B)_+] > E[(X_1-B)_-] = 0$. This implies by the first order condition of expectiles \eqref{focexpectile} that $R_{(1+z)^{-1}}(X_1) < -B$, contradicting the optimality of $X_0$ as a solution of $(P2,R_{(1+z)^{-1}},r)$.

We now assume $\Omega_B(X_1) < \oo$ and introduce the following notation,
\[
z = \Omega_B(X_0),\ 
z' = \Omega_B(X_1),\ 
q = (1+z)^{-1},\ 
q' = (1+z')^{-1} \;.
\]

Note that $z' > z$ by assumption and consequently $q > q'$. By Proposition~\ref{prop:omegaexpectile}, we have $R_q(X_0) = R_{q'}(X_1) = -B$ and $X_1 \neq -B$.
The first order condition of level-$q'$ expectile implies
\[
q'E[(X_1 + R_{q'}(X_1))_+] - (1-q')E[(X_1 + R_{q'}(X_1))_-] = 0.
\]
Substituting $-B$ for $R_{q'}(X_1)$ gives 
\begin{equation}
q'E[(X_1 - B)_+] - (1-q')E[(X_1 - B)_-] = 0. \label{prop:exp=>omega:eq1}
\end{equation}
Equation \eqref{prop:exp=>omega:eq1} implies the following equation~\eqref{prop:exp=>omega:eq2},  since $q > q'$ and $X_1 \neq B$.
\begin{equation}
qE[(X_1 - B)_+] - (1-q)E[(X_1 - B)_-] > 0  \label{prop:exp=>omega:eq2}
\end{equation}
Applying the first order condition for level-$q$ expectile gives the next equation \eqref{prop:exp=>omega:eq3}.
\begin{align}
qE[(X_1 + R_q(X_1))_+] - (1-q)E[(X_1 + R_q(X_1))_-] = 0 \label{prop:exp=>omega:eq3}
\end{align}
Together, \eqref{prop:exp=>omega:eq2} and \eqref{prop:exp=>omega:eq3} imply that $-B > R_q(X_1)$, which contradicts the optimality of $X_0$ as a solution of $(P2,R_{(1+z)^{-1}},r)$.
\end{proof}

\begin{proposition}
	Let $X_0\in \mathcal{X}$ be an optimal solution of the optimization problem 	
	$(P2,\Omega_B,r)$. If $\Omega_B(X_0)$ is finite, then $X_0$ is an optimal solution of the optimization problem $(P2,R_{(1+\Omega_B(X_0))^{-1}},r)$.
	\label{prop:omega=>exp}
\end{proposition}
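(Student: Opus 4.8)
The plan is to argue by contradiction in the same spirit as Proposition~\ref{prop:exp=>omega}, but running the implications in the reverse direction. First I would set $z = \Omega_B(X_0)$ and $q = (1+z)^{-1}$. Since $\Omega_B(X_0)$ is finite we have $E[(X_0 - B)_-] > 0$, and the defining ratio $E[(X_0-B)_+]/E[(X_0-B)_-] = z$ rearranges to $qE[(X_0-B)_+] - (1-q)E[(X_0-B)_-] = 0$; by the first order condition \eqref{focexpectile} and the uniqueness guaranteed by Lemma~\ref{monoexpectile} this gives $e_q(X_0) = B$, i.e. $R_q(X_0) = -B$ (equivalently, one may invoke Proposition~\ref{prop:omegaexpectile}). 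In particular $X_0 \neq B$.

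Next, suppose for contradiction that $X_0$ is not optimal for $(P2, R_{(1+z)^{-1}}, r)$. Then there is some $X_1 \in \mathcal{X}$ with $E[X_1] \geq r$ and $R_q(X_1) < R_q(X_0) = -B$. The key step is to convert this expectile inequality into an omega inequality. Writing the first order condition \eqref{focexpectile} for $X_1$ as $qE[(X_1 + R_q(X_1))_+] - (1-q)E[(X_1 + R_q(X_1))_-] = 0$, and using that $-B > R_q(X_1)$ together with the strict monotonicity supplied by Lemma~\ref{monoexpectile}, I would deduce that $qE[(X_1 - B)_+] - (1-q)E[(X_1 - B)_-] > 0$.

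Finally I would translate this strict inequality back into a statement about $\Omega_B$. If $E[(X_1 - B)_-] = 0$, then the inequality forces $E[(X_1 - B)_+] > 0$ and hence $\Omega_B(X_1) = \oo > z$. Otherwise $E[(X_1 - B)_-] > 0$, and dividing through gives $E[(X_1-B)_+]/E[(X_1-B)_-] > (1-q)/q = z$, so again $\Omega_B(X_1) > z = \Omega_B(X_0)$. Either way this contradicts the optimality of $X_0$ as a solution of $(P2, \Omega_B, r)$, completing the argument.

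I expect the main obstacle to be the careful handling of the boundary case $E[(X_1 - B)_-] = 0$, where the omega ratio is infinite and the final division argument is invalid; this must be isolated and treated separately, exactly as the infinite-omega case is handled in Proposition~\ref{prop:exp=>omega}. A secondary subtlety is getting the direction of the monotonicity step right: since $R_q(X_1) < -B$, the root $m = -R_q(X_1) = e_q(X_1)$ of the decreasing function $m \mapsto qE[(X_1-m)_+] - (1-q)E[(X_1-m)_-]$ exceeds $B$, so evaluating this function at the \emph{smaller} argument $m = B$ yields a strictly positive value, which is precisely the inequality needed.
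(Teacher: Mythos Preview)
Your proposal is correct and follows essentially the same contradiction argument as the paper: set $q=(1+\Omega_B(X_0))^{-1}$, use Proposition~\ref{prop:omegaexpectile} to get $R_q(X_0)=-B$, then from a hypothetical $X_1$ with $R_q(X_1)<R_q(X_0)$ deduce via the first order condition and Lemma~\ref{monoexpectile} that $qE[(X_1-B)_+]-(1-q)E[(X_1-B)_-]>0$, hence $\Omega_B(X_1)>\Omega_B(X_0)$. Your explicit treatment of the boundary case $E[(X_1-B)_-]=0$ is in fact a small improvement over the paper, which simply writes the final ratio without isolating that case.
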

\begin{proof}
Define $q = (1+\Omega_B(X_0))^{-1}$ and suppose by contradiction there exists $X_1 \in \mathcal{X}$ such that $E[X_1] \geq r$ and $R_q(X_1) < R_q(X_0)$. The first order condition of level-$q$ expectile \eqref{focexpectile} gives \eqref{prop:omega=>exp:eq1}, which implies \eqref{prop:omega=>exp:eq1.5} since $R_q(X_1) < R_q(X_0)$.
\begin{align}
qE[(X_1 + R_q(X_1))_+] - (1-q)E[(X_1 + R_q(X_1))_-] &= 0\;, \label{prop:omega=>exp:eq1} \\
qE[(X_1 + R_q(X_0))_+] - (1-q)E[(X_1 + R_q(X_0))_-] &> 0\;. \label{prop:omega=>exp:eq1.5}
\end{align}
By Proposition~\ref{prop:omegaexpectile}, we have that $R_q(X_0) = -B$, which implies
\begin{equation}
qE[(X_1 - B)_+] - (1-q)E[(X_1 - B)_-] > 0\;. \label{prop:omega=>exp:eq2}
\end{equation}
Equation~\eqref{prop:omega=>exp:eq2} in turn implies the following equation~\eqref{prop:omega=>exp:eq3}, which is a contradiction to the optimality of $X_0$ as a solution of $(P2,\Omega_B,r)$.
\begin{equation}
\Omega_B(X_1) = \frac{E[(X_1 - B)_+]}{E[(X_1 - B)_-]} > \frac{1-q}{q} = \Omega_B(X_0) \;.\label{prop:omega=>exp:eq3}	\qedhere
\end{equation}
\end{proof}

Corollary \ref{equival} and Propositions
\ref{prop:exp=>omega}, \ref{prop:omega=>exp} show that
the same set of solutions (efficient frontier) can be generated by varying parameters in problems $(P1,R_{(1+z)^{-1}},-B)$,
$(P1,\Omega_B,z)$,
$(P2,R_{(1+z)^{-1}},r)$,
$(P2,\Omega_B,r)$.

Corollary \ref{equival} is valid for optimal portfolios found with  the problems  $(P1,\rho_{(1+z)^{-1}},-B) $ and $(P1,\phi_B,z)$.
\begin{corollary}
	If $z > 1$ and there exists $x_0 \in V$ such that $\phi_B(x_0) \geq z$, then problems $(P1,\rho_{(1+z)^{-1}},-B) $ and $(P1,\phi_B,z)$ are equivalent, i.e., their optimal solution sets and objective values coincide.
\end{corollary}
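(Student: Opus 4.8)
The plan is to obtain this portfolio-level corollary as the specialization of Corollary~\ref{equival} in which the feasible set of random variables is $\mathcal{X} = \{\xi^\T x : x \in V\}$, transporting the constraint equivalence of Proposition~\ref{prop:omegaexpectile_sets} back to the decision variable $x$. The essential observation is that Proposition~\ref{prop:omegaexpectile_sets} applies to each random variable $\xi^\T x$ separately, and since $\phi_B(x) = \Omega_B(\xi^\T x)$ and $\rho_{(1+z)^{-1}}(x) = R_{(1+z)^{-1}}(\xi^\T x)$ by definition, it yields
\[
\{x \in V : \phi_B(x) \geq z \text{ or } \xi^\T x = B\} = \{x \in V : \rho_{(1+z)^{-1}}(x) \leq -B\}.
\]
In words, the feasible set of $(P1,\phi_B,z)$ and that of $(P1,\rho_{(1+z)^{-1}},-B)$ coincide except possibly for portfolios $x$ whose return $\xi^\T x$ equals the constant $B$ almost surely.

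Next I would show that such constant-$B$ portfolios can never be optimal, so this discrepancy is immaterial. Mirroring the proof of Corollary~\ref{equival}, the hypothesis $\phi_B(x_0) = \Omega_B(\xi^\T x_0) \geq z > 1$ forces $E[(\xi^\T x_0 - B)_+] > E[(\xi^\T x_0 - B)_-] \geq 0$, whence $E[\xi^\T x_0] > B$. By Proposition~\ref{prop:omegaexpectile_sets}, $x_0$ is feasible for both problems, so the optimal objective value of each is at least $E[\xi^\T x_0] > B$. On the other hand, any $x$ with $\xi^\T x = B$ contributes objective value $E[\xi^\T x] = B$, which is strictly below the optimal value and therefore cannot be attained at a maximizer.

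Finally, I would conclude the equivalence. Since the two feasible sets agree once the (never-optimal) constant-$B$ portfolios are removed, adjoining or deleting these portfolios changes neither the optimal value nor the set of maximizers of either problem; hence $(P1,\rho_{(1+z)^{-1}},-B)$ and $(P1,\phi_B,z)$ have identical optimal objective values and optimal solution sets. I do not anticipate a serious obstacle; the only point requiring care is that the map $x \mapsto \xi^\T x$ need not be injective, so I would keep the entire argument in $x$-space rather than passing through $\mathcal{X}$, and I would read $\xi^\T x = B$ throughout as the statement that the portfolio return is almost surely the constant $B$.
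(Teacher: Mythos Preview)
Your proposal is correct and follows essentially the same approach as the paper, which simply asserts that this corollary is the portfolio-level restatement of Corollary~\ref{equival} without supplying a separate proof. Your additional care in arguing directly in $x$-space to avoid any issue with non-injectivity of $x \mapsto \xi^\T x$ is sound but not strictly necessary, since two portfolios with the same return share both objective value and constraint status, so the preimage correspondence is automatic.
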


Similar corollaries follow from Propositions \ref{prop:exp=>omega} and \ref{prop:omega=>exp}. 
\begin{corollary}
\label{cor:exp=>omega}
	Let $z > 1$ and $x_0\in V$ be an optimal solution of the optimization problem 	
	$(P2,\rho_{(1+z)^{-1}},r)$. If $\xi^T x_0$ is non-constant, then $x_0$ is an optimal solution of the optimization problem $(P2,\phi_{-\rho_{(1+z)^{-1}}(x_0)},r)$.
\end{corollary}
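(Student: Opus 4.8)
The plan is to reduce this corollary to Proposition~\ref{prop:exp=>omega} by passing from portfolios to their associated return random variables. I would define the feasible set of random variables $\mathcal{X} = \{\xi^\T x \mid x \in V\}$ and write $X_0 = \xi^\T x_0$. Because $\rho_q(x) = R_q(\xi^\T x)$ and $\phi_B(x) = \Omega_B(\xi^\T x)$ by definition, the portfolio problem $(P2,\rho_{(1+z)^{-1}},r)$ over $V$ is precisely the random-variable problem $(P2,R_{(1+z)^{-1}},r)$ over $\mathcal{X}$, and likewise $(P2,\phi_B,r)$ over $V$ corresponds to $(P2,\Omega_B,r)$ over $\mathcal{X}$. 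The whole argument is then a dictionary translation through this correspondence.

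First I would verify that optimality of $x_0$ transfers to $X_0$. By hypothesis, every $x \in V$ with $E[\xi^\T x] \geq r$ satisfies $\rho_{(1+z)^{-1}}(x) \geq \rho_{(1+z)^{-1}}(x_0)$. Since every element of $\mathcal{X}$ has the form $\xi^\T x$ for some $x \in V$, and since $R_{(1+z)^{-1}}(\xi^\T x) = \rho_{(1+z)^{-1}}(x)$ and $E[\xi^\T x] = E[X]$, this is exactly the statement that $X_0$ minimizes $R_{(1+z)^{-1}}$ over all $X \in \mathcal{X}$ with $E[X] \geq r$; that is, $X_0$ solves $(P2,R_{(1+z)^{-1}},r)$. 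The hypothesis that $\xi^\T x_0$ is non-constant says precisely that $X_0$ is non-constant, so the assumptions of Proposition~\ref{prop:exp=>omega} are met.

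Applying Proposition~\ref{prop:exp=>omega} then yields that $X_0$ is an optimal solution of $(P2,\Omega_{-R_{(1+z)^{-1}}(X_0)},r)$ over $\mathcal{X}$. Setting $B = -R_{(1+z)^{-1}}(X_0) = -\rho_{(1+z)^{-1}}(x_0)$ and translating back through the same correspondence, namely that $\phi_B(x) = \Omega_B(\xi^\T x) \leq \Omega_B(X_0) = \phi_B(x_0)$ for every feasible $x \in V$, shows that $x_0$ is an optimal solution of $(P2,\phi_B,r)$ over $V$, which is the claim.

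I do not expect a serious obstacle, as the reduction is essentially formal. The one point requiring care is the many-to-one nature of the map $x \mapsto \xi^\T x$: distinct portfolios can produce the same return distribution. This would matter if the corollary asserted that the full optimal solution sets coincide, but here we only need that the specific portfolio $x_0$ is optimal, and the optimality of $x_0$ and of $X_0$ transfer to one another through equality of the corresponding objective values, so the argument goes through cleanly.
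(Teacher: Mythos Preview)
Your proposal is correct and follows exactly the approach the paper intends: the paper does not give a separate proof of this corollary but simply states that it follows from Proposition~\ref{prop:exp=>omega}, and your reduction via $\mathcal{X} = \{\xi^\T x \mid x \in V\}$ is precisely the formal translation that makes this work. Your remark about the many-to-one nature of $x \mapsto \xi^\T x$ being harmless here is also well taken.
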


\begin{corollary}
	Let $x_0\in V$ be an optimal solution of the optimization problem 	
	$(P2,\phi_B,r)$. If $\phi_B(x_0)$ is finite, then $x_0$ is an optimal solution of the optimization problem $(P2,\rho_{(1+\phi_B(x_0))^{-1}},r)$.
\end{corollary}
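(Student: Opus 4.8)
The plan is to reduce this statement to Proposition~\ref{prop:omega=>exp}, which is precisely its counterpart phrased in the space of random variables. The governing observation is that the objective and constraint of both $(P2,\phi_B,r)$ and $(P2,\rho_{(1+\phi_B(x_0))^{-1}},r)$ depend on the portfolio $x$ only through the scalar return $\xi^\T x$, since $\phi_B(x) = \Omega_B(\xi^\T x)$ and $\rho_q(x) = R_q(\xi^\T x)$ by definition.

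First I would introduce the feasible set of random returns $\mathcal{X} = \{\xi^\T x \,|\, x \in V\}$ and set $X_0 = \xi^\T x_0 \in \mathcal{X}$. I would check that the map $x \mapsto \xi^\T x$ carries the feasible region of $(P2,\phi_B,r)$ onto the feasible region of $(P2,\Omega_B,r)$ over $\mathcal{X}$: a portfolio $x$ satisfies $x \in V$ and $E[\xi^\T x] \geq r$ if and only if $X = \xi^\T x$ satisfies $X \in \mathcal{X}$ and $E[X] \geq r$, and in that case the objectives agree, $\phi_B(x) = \Omega_B(X)$. Consequently, optimality of $x_0$ for $(P2,\phi_B,r)$ forces $X_0$ to attain the common supremum, so $X_0$ is optimal for $(P2,\Omega_B,r)$ over $\mathcal{X}$.

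Next, since $\phi_B(x_0) = \Omega_B(X_0)$ is finite by hypothesis, I would apply Proposition~\ref{prop:omega=>exp} with this $\mathcal{X}$ to conclude that $X_0$ solves $(P2,R_q,r)$, where $q := (1+\Omega_B(X_0))^{-1} = (1+\phi_B(x_0))^{-1}$. Finally I would transport optimality back to portfolio space by the same correspondence in reverse: for any competitor $x_1$ feasible for $(P2,\rho_q,r)$, the return $X_1 = \xi^\T x_1$ lies in $\mathcal{X}$ with $E[X_1] \geq r$, hence is feasible for $(P2,R_q,r)$, so $R_q(X_0) \leq R_q(X_1)$, i.e. $\rho_q(x_0) \leq \rho_q(x_1)$. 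This establishes that $x_0$ is an optimal solution of $(P2,\rho_{(1+\phi_B(x_0))^{-1}},r)$.

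The argument is entirely routine, and there is no genuine analytic obstacle; the only point deserving care is the fidelity of the correspondence between portfolio-space and return-space optimization. The map $x \mapsto \xi^\T x$ need not be injective, as distinct portfolios may induce identical return distributions, but this is harmless because both problems see $x$ solely through $\xi^\T x$, so feasibility and objective values transfer cleanly in both directions. Equivalently, one could repeat the proof of Proposition~\ref{prop:omega=>exp} verbatim with $\xi^\T x_i$ in place of $X_i$ throughout; I prefer the reduction, since it makes transparent that neither the vector structure of $x$ nor the convexity of $V$ plays any role here.
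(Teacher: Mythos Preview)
Your proposal is correct and is exactly the approach the paper intends: the paper does not spell out a proof but simply notes that this corollary follows from Proposition~\ref{prop:omega=>exp}, and your reduction via $\mathcal{X}=\{\xi^\T x\mid x\in V\}$ is the natural way to make that passage precise.
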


\section{Subgradient of $\rho_q(x)$}
\label{sec:subgrad}
The function $\rho_q(x) = R_q(\xi^\T x)$ is convex in $x$ for  $q \leq \frac{1}{2}$.
This section presents a formula for subgradients of the convex function $\rho_q(x)$ when $q < \frac{1}{2}$. The subgradient formula can be used for implementation of the convex programming algorithms for minimization of $\rho_q(x)$.

\begin{lemma}
	Suppose $q < \frac{1}{2}$. Then $R_q(X) > E[-X]$ for all non-constant $X\in L^1(\mathcal A)$, i.e. $R_q(X)$ is strictly expectation bounded as defined in \shortciteN{rockafellar2006generalized}.
\end{lemma}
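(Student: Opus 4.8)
The plan is to reduce the claimed inequality to a statement about the sign of the expectile-defining function evaluated at the mean. Since $R_q(X) = -e_q(X)$ and $E[-X] = -E[X]$, the inequality $R_q(X) > E[-X]$ is equivalent to $e_q(X) < E[X]$. So I would aim to show that the level-$q$ expectile lies strictly below the mean whenever $q < \frac{1}{2}$ and $X$ is non-constant.

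To this end, write $g(m) = qE[(X-m)_+] - (1-q)E[(X-m)_-]$. By Lemma~\ref{monoexpectile}, $g$ is strictly decreasing, and by the first-order condition \eqref{focexpectile} it vanishes exactly at $m = e_q(X)$. Hence $e_q(X) < E[X]$ will follow immediately once I show $g(E[X]) < 0$, because strict monotonicity then forces the unique root $e_q(X)$ of $g$ to lie strictly to the left of $E[X]$.

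The computation of $g$ at the mean is the crux, and it is short. Since $E[X - E[X]] = 0$, I have $E[(X-E[X])_+] = E[(X-E[X])_-]$; call this common value $a$. Then
\[
g(E[X]) = q\,a - (1-q)\,a = (2q-1)\,a.
\]
Because $X$ is non-constant, $a > 0$ (if $a = 0$, both the positive and negative parts of $X - E[X]$ would vanish almost surely, forcing $X = E[X]$ a.s.), and because $q < \frac{1}{2}$ we have $2q - 1 < 0$. Therefore $g(E[X]) < 0$.

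Combining these, the strict monotonicity of $g$ together with $g(e_q(X)) = 0 > g(E[X])$ yields $e_q(X) < E[X]$, i.e. $R_q(X) > E[-X]$, which is precisely the definition of strict expectation boundedness. The only place requiring care is the justification that $a > 0$ for non-constant $X$; everything else is an essentially immediate consequence of the monotonicity lemma. I expect no genuine obstacle here, since the argument is a direct sign analysis built entirely on results already established in the excerpt.
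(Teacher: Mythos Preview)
Your proof is correct and follows essentially the same route as the paper's: both evaluate the expectile-defining function $g(m)=qE[(X-m)_+]-(1-q)E[(X-m)_-]$ at $m=E[X]$, use non-constancy of $X$ together with $q<\tfrac12$ to get $g(E[X])<0$, and then invoke monotonicity of $g$ and the first-order condition $g(e_q(X))=0$ to conclude $e_q(X)<E[X]$. Your write-up is slightly more explicit (naming $a$ and computing $(2q-1)a$, and appealing directly to the strict monotonicity of Lemma~\ref{monoexpectile}), but the argument is the same.
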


\begin{proof}
Since $0 = E[X - E[X]] = E[(X-E[X])_+] - E[(X-E[X])_-]$, we have
\[
\frac{1}{2}E[(X-E[X])_+] - \frac{1}{2}E[(X-E[X])_-] = 0.
\]
Since $X$ is non-constant, $E[(X-E[X])_+]$ and $E[(X-E[X])_-]$ are positive. Moreover, the assumption that $q < \frac{1}{2}$ implies
\[
qE[(X-E[X])_+] - (1-q)E[(X-E[X])_-] < 0.
\]
The expression $qE[(X-m)_+)] - (1-q)E[(X-m)_-]$ is non-increasing in $m$ and equals $0$ when $m = e_q(X)$. Hence, $e_q(X) < E[X]$ which implies $R_q(X) = -e_q(X) > E[-X]$.
\end{proof}
\shortciteN{bellini2014generalized} provide the following dual representation of $R_q(X)$.
\begin{align*}
R_q(X) &= -\inf_{Q \in \mathcal{Q}}E[XQ],\\
\mathcal{Q} &= \left\{Q\in L^\infty : Q > 0 \ a.s.,\ E[Q] = 1,\  \frac{\text{ess sup}(Q)}{\text{ess inf}(Q)} \leq \frac{1-q}{q}\right\}
\end{align*}

Because $R_q$ is strictly expectation bounded, $D_q(X) = R_q(X) + E[X]$ is a deviation measure in the sense of \shortciteN{rockafellar2006generalized}. Applying Proposition 1 of \shortciteN{rockafellar2006optimality} to $D_q(X)$ leads to
\[
\partial R_q(X) = -\text{argmin}_{Q \in \mathcal{Q}} E[XQ].
\]

The following proposition is modified from \shortciteN{bellini2014generalized}. Let us denote
\[
\nu_t(X)=q1_{\{X > e_q(X)\}} + (1-q)1_{\{X < e_q(X)\}} + t1_{\{X = e_q(X)\}}\;.
\]

\begin{proposition}[\shortciteNP{bellini2014generalized}]
	\label{subgradexpectile}
	For all $t \in [q, 1-q]$, 
	\[
	-Q_t = -\frac{\nu_t( X) }{E[\nu_t( X)]}
	\]
	is an element of $\partial R_q(X)$.
\end{proposition}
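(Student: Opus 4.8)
The plan is to lean entirely on the two facts recorded immediately before the statement: the dual representation $R_q(X) = -\inf_{Q \in \mathcal{Q}} E[XQ]$ and the subgradient identity $\partial R_q(X) = -\argmin_{Q \in \mathcal{Q}} E[XQ]$. By the latter, to conclude $-Q_t \in \partial R_q(X)$ it suffices to show that $Q_t$ attains the infimum in the dual representation; that is, I would establish two things, namely that $Q_t \in \mathcal{Q}$ (feasibility) and that $E[XQ_t] = \inf_{Q \in \mathcal{Q}} E[XQ] = -R_q(X)$ (optimality).

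First I would verify feasibility, $Q_t \in \mathcal{Q}$. The normalization $E[Q_t] = 1$ is immediate from the definition $Q_t = \nu_t(X)/E[\nu_t(X)]$. Positivity holds because for $t \in [q, 1-q]$ every value of $\nu_t(X)$ lies in $[q, 1-q] \subset (0, \oo)$, so $\nu_t(X) > 0$ a.s.\ and $E[\nu_t(X)] > 0$. The same pointwise bound $q \leq \nu_t(X) \leq 1-q$ gives $\text{ess inf}(\nu_t(X)) \geq q$ and $\text{ess sup}(\nu_t(X)) \leq 1-q$; since $Q_t$ is a positive scalar multiple of $\nu_t(X)$, the scalar cancels in the quotient and $\frac{\text{ess sup}(Q_t)}{\text{ess inf}(Q_t)} = \frac{\text{ess sup}(\nu_t(X))}{\text{ess inf}(\nu_t(X))} \leq \frac{1-q}{q}$, as required.

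The main step is the optimality computation. Writing $m = e_q(X)$, I would show $E[(X-m)\nu_t(X)] = 0$. Expanding $\nu_t(X)$ into its three indicator pieces, the term on $\{X = m\}$ vanishes because $X - m = 0$ there (this is precisely why the value of $t$ is immaterial and the entire interval $t \in [q, 1-q]$ works), while the remaining two terms give $qE[(X-m)1_{\{X > m\}}] + (1-q)E[(X-m)1_{\{X < m\}}]$, which equals $qE[(X-m)_+] - (1-q)E[(X-m)_-]$ and is therefore $0$ by the first-order condition \eqref{focexpectile}. Consequently $E[X\nu_t(X)] = m\,E[\nu_t(X)]$, so that $E[XQ_t] = m = e_q(X) = -R_q(X)$.

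Finally I would combine the two parts: since $Q_t \in \mathcal{Q}$ we have $E[XQ_t] \geq \inf_{Q \in \mathcal{Q}} E[XQ] = -R_q(X)$, and the computation shows equality, so $Q_t$ is a minimizer and hence $-Q_t \in \partial R_q(X)$. The only delicate point, and the one I expect to be the crux, is the recognition that the weights $q$ and $1-q$ attached to the indicators in $\nu_t$ are exactly those appearing in the first-order condition; once the $\{X = m\}$ term is seen to drop out, the identity $E[(X-m)\nu_t(X)] = 0$ is just a rephrasing of \eqref{focexpectile}, and everything else is routine.
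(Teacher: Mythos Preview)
Your proposal is correct and follows essentially the same approach as the paper: verify $Q_t \in \mathcal{Q}$, then compute $E[XQ_t] = e_q(X)$ by observing that the term on $\{X = e_q(X)\}$ drops out and the remaining terms reproduce the first-order condition \eqref{focexpectile}. The paper's computation is organized slightly differently (it manipulates the full fraction and isolates $e_q(X)$ plus a quotient whose numerator vanishes), but your version $E[(X-m)\nu_t(X)] = 0$ is the same identity rearranged, and your feasibility check is simply more detailed than the paper's one-line assertion.
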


\begin{proof}
	First note that $Q_t \in \mathcal{Q}$, so to show $Q_t \in \partial R_q(X)$ it suffices to show $E[XQ_t] = e_q(X)$.
	\begin{align*}
	E[XQ_t] &= \\
	\frac{E[qX1_{\{X > e_q(X)\}} + (1-q)X1_{\{X < e_q(X)\}} + tX1_{\{X = e_q(X)\}}]}{E[q1_{\{X > e_q(X)\}} + (1-q)1_{\{X < e_q(X)\}} + t1_{\{X = e_q(X)\}}]}  &=\\
	e_q(X) + \frac{E[q(X - e_q(X))1_{\{X > e_q(X)\}} + (1-q)(X - e_q(X))1_{\{X < e_q(X)\}}]}{E[q1_{\{X > e_q(X)\}} + (1-q)1_{\{X < e_q(X)\}} + t1_{\{X = e_q(X)\}}]} &= e_q(X)
	\end{align*}
	The numerator of the second term in the penultimate step is $0$ because of the first order condition of expectile \eqref{focexpectile}.
\end{proof}

In Section \ref{sec:applications}, we solve portfolio optimization problems with a constraint on $\rho_q(x)$. The following lemma furnishes a map from the subdifferential of $R_q(X)$ to that of $\rho_q(x)$. We note that these optimization problems could alternatively be solved with convex programming approaches using the subdifferential of $R_q(X)$ and the theory of \shortciteN{rockafellar2006optimality}.

\begin{lemma}
	\label{subdiffmap}
	If $Y \in \partial R_q(\xi^\intercal x)$, then $(E[\xi_1 Y], \dots, E[\xi_n Y]) \in \partial \rho_q(x)$.
\end{lemma}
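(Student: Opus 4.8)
The plan is to recognize that $\rho_q$ is the composition of the coherent risk measure $R_q$ with the linear map $A\colon \R^n \to L^2(\mathcal A)$ defined by $Ax = \xi^\T x = \sum_{i=1}^n x_i \xi_i$, so that $\rho_q = R_q \circ A$, and then to invoke the subgradient chain rule for a convex function precomposed with a linear map. Concretely, the proposed subgradient $(E[\xi_1 Y], \dots, E[\xi_n Y])$ is exactly the adjoint $A^\ast Y$ of $A$ applied to $Y$, where the adjoint is taken with respect to the pairing $\langle Z, W\rangle = E[ZW]$ on $L^2$ and the Euclidean pairing on $\R^n$; indeed $\langle Ax, Y\rangle = E[(\xi^\T x)Y] = \sum_i x_i\, E[\xi_i Y] = \langle x, A^\ast Y\rangle$, which identifies $A^\ast Y = (E[\xi_1 Y], \dots, E[\xi_n Y])$. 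The statement is then the inclusion $A^\ast\,\partial R_q(Ax) \subseteq \partial(R_q\circ A)(x)$, which holds for any convex functional without a constraint qualification.

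The cleanest route is to unwind the definition directly rather than cite an abstract chain rule. First I would recall that $Y \in \partial R_q(X)$ means, by the definition of the subdifferential of the convex functional $R_q$ under the pairing $E[\,\cdot\,]$, that the subgradient inequality $R_q(Z) \geq R_q(X) + E[Y(Z - X)]$ holds for every random variable $Z$ in the domain. I would then specialize this inequality to $X = \xi^\T x$ and $Z = \xi^\T y$ for an arbitrary $y \in \R^n$, which is legitimate precisely because the inequality for $Y \in \partial R_q(\xi^\T x)$ is quantified over \emph{all} $Z$, in particular over those of the form $\xi^\T y$.

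The remaining step is purely linear algebra under the expectation: since $\xi^\T y - \xi^\T x = \sum_{i=1}^n \xi_i (y_i - x_i)$, linearity of expectation gives $E[Y(\xi^\T y - \xi^\T x)] = \sum_{i=1}^n E[\xi_i Y]\,(y_i - x_i)$. Substituting back and writing $\rho_q(y) = R_q(\xi^\T y)$, $\rho_q(x) = R_q(\xi^\T x)$ yields
\[
\rho_q(y) \;\geq\; \rho_q(x) + \sum_{i=1}^n E[\xi_i Y]\,(y_i - x_i)
\]
for all $y \in \R^n$, which is exactly the statement that $(E[\xi_1 Y], \dots, E[\xi_n Y]) \in \partial \rho_q(x)$.

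I do not expect a serious obstacle; the only points requiring care are well-definedness and the correct reading of the pairing. For well-definedness, each $E[\xi_i Y]$ must be finite: this follows because $Y \in \partial R_q(\xi^\T x) = -\argmin_{Q\in\mathcal Q} E[(\xi^\T x)Q]$ consists of (negatives of) elements of $\mathcal Q \subseteq L^\infty$, so $Y$ is bounded, while each $\xi_i \in L^2 \subseteq L^1$, making the products integrable by Hölder's inequality. The one conceptual subtlety worth flagging explicitly is that the subgradient inequality for $R_q$ is valid against every admissible perturbation $Z$, not merely those realizable as portfolio returns $\xi^\T y$; this is what makes the specialization to $Z = \xi^\T y$ immediate and is the crux that turns the abstract subdifferential of $R_q$ into a finite-dimensional subgradient of $\rho_q$.
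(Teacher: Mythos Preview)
Your proof is correct and is essentially the same as the paper's: both apply the subgradient inequality $R_q(Z) \geq R_q(X) + E[Y(Z-X)]$ with $X = \xi^\T x$, $Z = \xi^\T z$, expand $E[Y(\xi^\T z - \xi^\T x)] = \sum_i (z_i - x_i)E[\xi_i Y]$, and read off the finite-dimensional subgradient inequality for $\rho_q$. Your additional remarks on the adjoint interpretation and on integrability of $E[\xi_i Y]$ via $Y \in L^\infty$ are sound and add rigor the paper leaves implicit.
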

\begin{proof}
\[
	\rho_q(z) - \rho_q(x)	= R_q(\xi^\intercal z) - R_q(\xi^\intercal x) 
	\geq E[(\xi^\intercal z - \xi^\intercal x)Y] 
	= \sum_{i=1}^n (z_i - x_i)E[\xi_i Y] 
	= \langle (E[\xi_1 Y], \dots, E[\xi_n Y]), z-x \rangle \qedhere
	\]
\end{proof}

The following proposition provides a collection of subgradients for $\rho_q(x)$.
\begin{proposition}
	\label{prop:gr_formula}
	The following is a subgradient of $\rho_q(x)$ at $x$ for every $t\in [q,1-q]$,
	$$
	g_t =-\frac{E[\xi \,\nu_t( \xi^\intercal x) ]}{E[\nu_t( \xi^\intercal x)]}\;.
	$$
\end{proposition}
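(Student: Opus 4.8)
The plan is to obtain $g_t$ by simply composing the two results established immediately above: Proposition~\ref{subgradexpectile}, which supplies an explicit element of $\partial R_q(X)$, and Lemma~\ref{subdiffmap}, which transports a subgradient of $R_q$ at the random variable $\xi^\intercal x$ into a subgradient of $\rho_q$ at the portfolio $x$. Since $\rho_q(x) = R_q(\xi^\intercal x)$ by definition, this composition is precisely what is needed, and no further machinery is required.

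First I would specialize Proposition~\ref{subgradexpectile} to the random variable $X = \xi^\intercal x$. This yields, for each $t \in [q, 1-q]$, that
\[
Y := -Q_t = -\frac{\nu_t(\xi^\intercal x)}{E[\nu_t(\xi^\intercal x)]}
\]
is an element of $\partial R_q(\xi^\intercal x)$. Next I would apply Lemma~\ref{subdiffmap} with this choice of $Y$, concluding that the vector $(E[\xi_1 Y], \dots, E[\xi_n Y])$ lies in $\partial \rho_q(x)$.

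It then only remains to identify this vector with $g_t$. By linearity of expectation, its $i$-th component is
\[
E[\xi_i Y] = -\frac{E[\xi_i\,\nu_t(\xi^\intercal x)]}{E[\nu_t(\xi^\intercal x)]},
\]
and assembling the $n$ coordinates gives exactly $g_t = -E[\xi\,\nu_t(\xi^\intercal x)]/E[\nu_t(\xi^\intercal x)]$, where $E[\xi\,\nu_t(\xi^\intercal x)]$ is understood as the vector with these entries. Hence $g_t \in \partial \rho_q(x)$.

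I do not anticipate any substantial obstacle, since the statement is a direct corollary of the two preceding results. The only points requiring care are bookkeeping: one must observe that the scalar normalizer $1/E[\nu_t(\xi^\intercal x)]$ factors out uniformly across all coordinates, so that the componentwise application of Lemma~\ref{subdiffmap} reproduces the stated vector formula, and one should note that $E[\nu_t(\xi^\intercal x)] > 0$ because $\nu_t$ takes values in the set $\{q, 1-q, t\} \subset (0,1)$, guaranteeing that the quotient defining $g_t$ is well defined.
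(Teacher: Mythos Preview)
Your proposal is correct and follows exactly the same approach as the paper, which simply states ``Apply Lemma~\ref{subdiffmap} to $-Q_t$ from Proposition~\ref{subgradexpectile}.'' Your additional remarks on the componentwise identification and the positivity of $E[\nu_t(\xi^\intercal x)]$ are sound and merely make explicit what the paper leaves to the reader.
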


\begin{proof}
	Apply Lemma \ref{subdiffmap} to $-Q_t$ from Proposition \ref{subgradexpectile}.
\end{proof}

\section{Derivative of $\rho_q(\xi^\T x)$ for Discrete Distribution}
\label{sec:derivative}

In this section and in the following Case Study in Section \ref{sec:applications} we assume that the random vector $\xi$ follows a discrete distribution. We derive subgradient formula for $\rho_q(x)$ in this special case.
These results are used for the implementation of optimization algorithms.

\subsection{Notation}

\label{sec:notation}
Let $\xi$ be a random vector with finitely many outcomes $\xi^1, \dots, \xi^J$ with corresponding probabilities $p_j > 0$. For any $x \in \R^n$ define $P_x$, $N_x$, and $Z_x$ as follows.
\[
P_x = \{j\ |\ \xi^{j\T} x + \rho_q(x) > 0\},\quad N_x = \{j\ |\ \xi^{j\T} x + \rho_q(x) < 0\}, \quad Z_x = \{j\ |\ \xi^{j\T} x + \rho_q(x) = 0\}
\]

The following proposition provides a collection of subgradients for $\rho_q(x)$.
\begin{proposition}
	\label{prop:gr_discrete}
	The following is a subgradient of $\rho_q(x)$ at $x$ for every $t\in [q,1-q]$.
	\[
	g_t = -\frac{q\sum_{j \in P_x} p_j\xi^{j} + (1-q)\sum_{j\in N_x} p_j\xi^{j} + t\sum_{j\in Z_x}p_j\xi^j}{q\sum_{j\in P_x}p_j + (1-q)\sum_{j\in N_x}p_j + t\sum_{j \in Z_x}p_j}\;.
	\]
\end{proposition}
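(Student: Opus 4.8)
The plan is to specialize the general subgradient formula of Proposition~\ref{prop:gr_formula} to the discrete setting, so that everything reduces to evaluating two expectations as finite sums. The content of the discrete result is already fully contained in the continuous formula; the proposition merely records what that formula becomes when $\xi$ is supported on finitely many points.

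First I would recall that by definition $\rho_q(x) = R_q(\xi^\T x) = -e_q(\xi^\T x)$, so that $e_q(\xi^\T x) = -\rho_q(x)$. This is the key identification: it lets me rewrite the three events appearing in $\nu_t$ in terms of the sign of $\xi^\T x + \rho_q(x)$. Concretely, the event $\{\xi^\T x > e_q(\xi^\T x)\}$ becomes $\{\xi^\T x + \rho_q(x) > 0\}$, and similarly the strict-less-than and equality events become $\{\xi^\T x + \rho_q(x) < 0\}$ and $\{\xi^\T x + \rho_q(x) = 0\}$. For the discrete vector $\xi$ taking the value $\xi^j$ with probability $p_j$, these three events partition the index set $\{1,\dots,J\}$ into exactly $P_x$, $N_x$, and $Z_x$.

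Next I would evaluate the numerator and denominator from Proposition~\ref{prop:gr_formula} as finite sums. Since $\xi$ is supported on $\{\xi^1,\dots,\xi^J\}$,
\[
E[\nu_t(\xi^\T x)] = q\sum_{j\in P_x}p_j + (1-q)\sum_{j\in N_x}p_j + t\sum_{j\in Z_x}p_j,
\]
and the vector-valued expectation is
\[
E[\xi\,\nu_t(\xi^\T x)] = q\sum_{j\in P_x}p_j\xi^j + (1-q)\sum_{j\in N_x}p_j\xi^j + t\sum_{j\in Z_x}p_j\xi^j.
\]
Substituting these into $g_t = -E[\xi\,\nu_t(\xi^\T x)]/E[\nu_t(\xi^\T x)]$ yields exactly the claimed formula, and Proposition~\ref{prop:gr_formula} guarantees that this $g_t$ is a subgradient of $\rho_q$ at $x$ for every $t \in [q,1-q]$.

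There is no genuine analytical obstacle here, since the work was done in Propositions~\ref{subgradexpectile} and~\ref{prop:gr_formula}. The only point requiring a moment's care is confirming that the denominator is strictly positive, so that the ratio is well defined; this follows because $t \geq q > 0$ together with $\sum_j p_j = 1$ forces at least one of the three probability sums to be positive.
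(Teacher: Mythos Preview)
Your proposal is correct and follows exactly the same route as the paper, whose proof consists of the single line ``Apply Proposition~\ref{prop:gr_formula}.'' You have simply spelled out that application in detail by identifying the events in $\nu_t(\xi^\T x)$ with the index sets $P_x$, $N_x$, $Z_x$ and evaluating the expectations as finite sums.
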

\begin{proof}
	Apply Proposition \ref{prop:gr_formula}.
\end{proof}

\subsection{Partial Derivative of $\rho_q(x)$ when $Z_x=\emptyset$}
This section again derives the gradient formula for discrete distributions without using Proposition \ref{prop:gr_discrete}. This is done for illustrative purposes to show that the result can be obtained directly without using the sophisticated dual representation concept. In the following, we use the fact that $R_q(X)$ can be defined as the unique solution to $qE[(X+m)_+)] - (1-q)E[(X+m)_-] = 0$. 

\begin{proposition}
	Suppose $Z_x = \emptyset$. Then there exists an open neighborhood $U$ of $x$ such that for every $y \in U$, 
	\[
	\rho_q(y) = -\frac{q\sum_{j\in P_x} p_j \xi^{j\T}y + (1-q)\sum_{j \in N_x} p_j \xi^{j\T}y}{q(\sum_{j \in P_x} p_j) + (1-q)(\sum_{j \in N_x} p_j)}.
	\]
\end{proposition}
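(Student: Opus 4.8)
The plan is to introduce the candidate affine function
\[
f(y) = -\frac{q\sum_{j\in P_x}p_j\xi^{j\T}y + (1-q)\sum_{j\in N_x}p_j\xi^{j\T}y}{q\sum_{j\in P_x}p_j + (1-q)\sum_{j\in N_x}p_j}
\]
and to prove that $\rho_q(y) = f(y)$ on a neighborhood of $x$ by verifying that $f(y)$ solves the defining equation of $\rho_q(y)$. Throughout I will use the characterization that $m = \rho_q(y)$ is the unique root of $G(y,m) := q\sum_j p_j(\xi^{j\T}y + m)_+ - (1-q)\sum_j p_j(\xi^{j\T}y + m)_-$, which is strictly increasing in $m$ by the $m \mapsto -m$ version of Lemma~\ref{monoexpectile}. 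The advantage of working with the explicit $f$ is that the argument never appeals to continuity of $\rho_q$ itself, only to continuity of the affine quantities built from $f$.

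First I would record two preliminaries. Since $Z_x = \emptyset$, the sets $P_x$ and $N_x$ partition $\{1,\dots,J\}$, so $\sum_{j\in P_x}p_j + \sum_{j\in N_x}p_j = 1$ and the denominator satisfies $q\sum_{j\in P_x}p_j + (1-q)\sum_{j\in N_x}p_j \geq \min(q,1-q) > 0$; hence $f$ is a well-defined affine (in particular continuous) function of $y$. Next, evaluating $G(x,\rho_q(x)) = 0$ and splitting the sums according to $P_x$ and $N_x$ (the positive part surviving on $P_x$, the negative part on $N_x$, with no contribution from $Z_x$) yields a linear equation in $\rho_q(x)$ whose unique solution is exactly $f(x)$; thus $f(x) = \rho_q(x)$.

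The central step is the neighborhood argument. At $y = x$ the affine quantities $\xi^{j\T}y + f(y)$ equal $\xi^{j\T}x + \rho_q(x)$, which are strictly positive for $j\in P_x$ and strictly negative for $j\in N_x$. Since each map $y \mapsto \xi^{j\T}y + f(y)$ is continuous and there are finitely many indices, these strict inequalities persist on a common open neighborhood $U$ of $x$; that is, for every $y\in U$ we have $\xi^{j\T}y + f(y) > 0$ for all $j\in P_x$ and $\xi^{j\T}y + f(y) < 0$ for all $j\in N_x$. For such $y$ the positive and negative parts in $G(y,f(y))$ again collapse onto $P_x$ and $N_x$ respectively, and substituting the defining relation of $f(y)$ gives $G(y,f(y)) = 0$. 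By uniqueness of the root of $G(y,\cdot)$ this forces $\rho_q(y) = f(y)$ for all $y\in U$, which is the claim.

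I expect the only genuine obstacle to be keeping the neighborhood argument clean: one must insert $f$, not $\rho_q$, into the sign conditions, so that the freezing of the partition rests solely on continuity of the explicit affine function $f$ rather than on the (as yet unestablished) continuity of $\rho_q$. Once the signs are frozen on $U$, the remaining verification that $G(y,f(y)) = 0$ is the same algebraic identity already used to establish $f(x) = \rho_q(x)$, now carried out at a general point $y\in U$.
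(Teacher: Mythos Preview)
Your proof is correct and follows essentially the same route as the paper: verify the formula at $x$ from the first-order condition, freeze the sign partition on a neighborhood, and then check that the candidate value satisfies the defining equation at every $y$ in that neighborhood. The one noteworthy difference is in how the neighborhood is built. The paper works with a product neighborhood $U\times V$ of $(x,\rho_q(x))$ on which the signs of $\xi^{j\T}y+m$ are frozen, invokes convexity (hence continuity) of $\rho_q$, and then shrinks $U$ so that the candidate formula lands in $V$. You instead plug $f(y)$ directly into the sign conditions and use only the continuity of the finitely many affine maps $y\mapsto\xi^{j\T}y+f(y)$, never appealing to continuity of $\rho_q$. This is a small but genuine streamlining: it makes the argument entirely self-contained and eliminates the auxiliary neighborhood $V$, at no cost in generality.
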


\begin{proof}
By assumption, $P_x \sqcup N_x = \{1,\dots, J\}$, and by the definition of $\rho_q(x)$, we have
\[
q\sum_{j\in P_x} p_j (\xi^{j\T}x + \rho_q(x)) + (1-q) \sum_{j\in N_x}  p_j (\xi^{j\T}x + \rho_q(x)) = 0.
\]
After rearrangement, we have
\[
\rho_q(x) = -\frac{q\sum_{j\in P_x} p_j \xi^{j\T}x + (1-q)\sum_{j \in N_x} p_j \xi^{j\T}x}{q(\sum_{j \in P_x} p_j) + (1-q)(\sum_{j \in N_x} p_j)}.
\]

Note that $\rho_q(x)$ is convex and thus continuous. Choose open neighborhoods $U$ of $x$ and $V$ of $\rho_q(x)$ such that

\begin{enumerate}
\item $\{j\ |\ \xi^{j\T} y + m > 0 \}  = P_x$ for every $(y,m) \in U \times V$, 
\item $\{j\ |\ \xi^{j\T} y + m < 0 \} = N_x$ for every $(y,m) \in U \times V$, 
\item $-\frac{q\sum_{j\in P_x} p_j \xi^{j\T}y + (1-q)\sum_{j \in N_x} p_j \xi^{j\T}y}{q(\sum_{j \in P_x} p_j) + (1-q)(\sum_{j \in N_x} p_j)} \in V$ for every $y \in U$. 
\end{enumerate}
Let $y$ be an element of $U$.
\begin{align*}
-\frac{q\sum_{j\in P_x} p_j \xi^{j\T}y + (1-q)\sum_{j \in N_x} p_j \xi^{j\T}y}{q(\sum_{j \in P_x} p_j) + (1-q)(\sum_{j \in N_x} p_j)} &= m \iff \\
q\sum_{j\in P_x} p_j (\xi^{j\T}y + m) + (1-q)\sum_{j \in N_x} p_j (\xi^{j\T}y + m) &= 0 \iff \\
 q\sum_{\{j\ |\ \xi^{j\T} y + m > 0 \}} p_j (\xi^{j\T}y + m) + (1-q)\sum_{\{j\ |\ \xi^{j\T} y + m < 0 \}} p_j (\xi^{j\T}y + m) &= 0 \iff \\
 qE[(\xi^{j\T}y+m)_+)] - (1-q)E[(\xi^{j\T}y+m)_-] &= 0
\end{align*}

Hence, $m = \rho_q(y)$.
\end{proof}

\begin{corollary}
Suppose $Z_x = \emptyset$. Then 
\[
\frac{\partial \rho_q}{\partial x_i}(x) = -\frac{q\sum_{j\in P_x} p_j \xi^j_i + (1-q)\sum_{j \in N_x} p_j \xi^j_i}{q(\sum_{j \in P_x} p_j) + (1-q)(\sum_{j \in N_x} p_j)} \;.
\]
\end{corollary}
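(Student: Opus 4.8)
The plan is to observe that the preceding proposition already does all the substantive work: it identifies $\rho_q$ with a single explicit formula valid on an entire open neighborhood $U$ of $x$, and that formula is affine in $y$. Once one has an affine expression valid on an open set, the partial derivatives are immediate, since an affine function is differentiable everywhere and its partial derivatives are simply the coefficients of the corresponding variables.

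First I would invoke the neighborhood $U$ furnished by the proposition and restrict attention to $y \in U$. On $U$ the denominator
\[
q\Bigl(\sum_{j \in P_x} p_j\Bigr) + (1-q)\Bigl(\sum_{j \in N_x} p_j\Bigr)
\]
is a positive constant independent of $y$, being built only from the fixed index sets $P_x$, $N_x$ and the fixed probabilities, while the numerator $q\sum_{j\in P_x} p_j \xi^{j\T}y + (1-q)\sum_{j\in N_x} p_j \xi^{j\T} y$ depends on $y$ only through the linear maps $y \mapsto \xi^{j\T} y = \sum_k \xi^j_k y_k$. Hence $\rho_q$ agrees on $U$ with an affine function of $y$, and in particular is differentiable at $x$.

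Next I would differentiate this affine expression with respect to $y_i$. Since $\partial (\xi^{j\T} y)/\partial y_i = \xi^j_i$, the partial derivative of the numerator is $q\sum_{j\in P_x} p_j \xi^j_i + (1-q)\sum_{j\in N_x} p_j \xi^j_i$, and dividing by the constant denominator yields exactly the claimed formula. Because the expression holds throughout the open set $U$ rather than only at the single point $x$, the partial derivative is well-defined, and evaluating at $y = x$ gives the stated value of $\frac{\partial \rho_q}{\partial x_i}(x)$.

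I do not anticipate a genuine obstacle here. The only point requiring care is that differentiating term by term is legitimate, and this rests entirely on the proposition having established the formula on an open neighborhood rather than at an isolated point; granting that, the corollary is a one-line consequence of reading off the coefficients of an affine function.
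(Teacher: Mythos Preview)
Your proposal is correct and is exactly the intended approach: the paper states the corollary without proof because it is immediate from the preceding proposition, which expresses $\rho_q$ as an affine function of $y$ on an open neighborhood of $x$. Reading off the coefficient of $y_i$ is all that is needed, as you describe.
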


Note that $\rho_q(x)$ need not be differentiable when $Z_x \neq \emptyset$. As an example, suppose $n = 2$ and $\xi$ is uniformly distributed over $\{ (0,0), (\frac{1}{6}, \frac{1}{3}), (1, 1) \}$. Let $q = \frac{1}{4}$ and $x = (1,1)$. One can easily check that $\rho_q(x) = -\frac{1}{2}$. However, $\lim_{h \to 0^+} (\rho(x+he_1) - \rho(x))/h = -\frac{3}{14}$ while $\lim_{h \to 0^-} (\rho(x+he_1) - \rho(x))/h = -\frac{7}{30}$, so $\frac{\partial \rho_q}{\partial x_1}$ doesn't exist at $x=(1,1)$.

\section{Case Study}
\label{sec:applications}

The codes, data, and solution results for the first case study are available at ``Portfolio Optimization with Expectiles" \cite{casestudyexpectile}. We solved the problem in a MATLAB environment using the Portfolio Safeguard (PSG) optimization software. PSG has both linear and convex programming algorithms.
Convex programming is preferable for the cases with a large number of scenarios.

First, we consider the following single-stage stochastic optimization problem.
 \[
 \max E[\xi^\T x]\quad \text{ s.t.}\quad \rho_q(x) \leq b,\ x\geq 0,\ 1^\T x = 1 \;.
 \]
Here $\xi = (\xi_1,\xi_2,\xi_3,\xi_4)$ is a discrete, uniform random vector with $10,000$ outcomes and $q=0.05$. An optimal solution for each $b$ in a given range was computed by supplying PSG with subroutines calculating the value and a subgradient of $\rho_q$ at any decision vector $x$. \\

\begin{figure}[h]
\centering
\begin{subfigure}{.5\textwidth}
  \centering
  \includegraphics[width=0.9\linewidth]{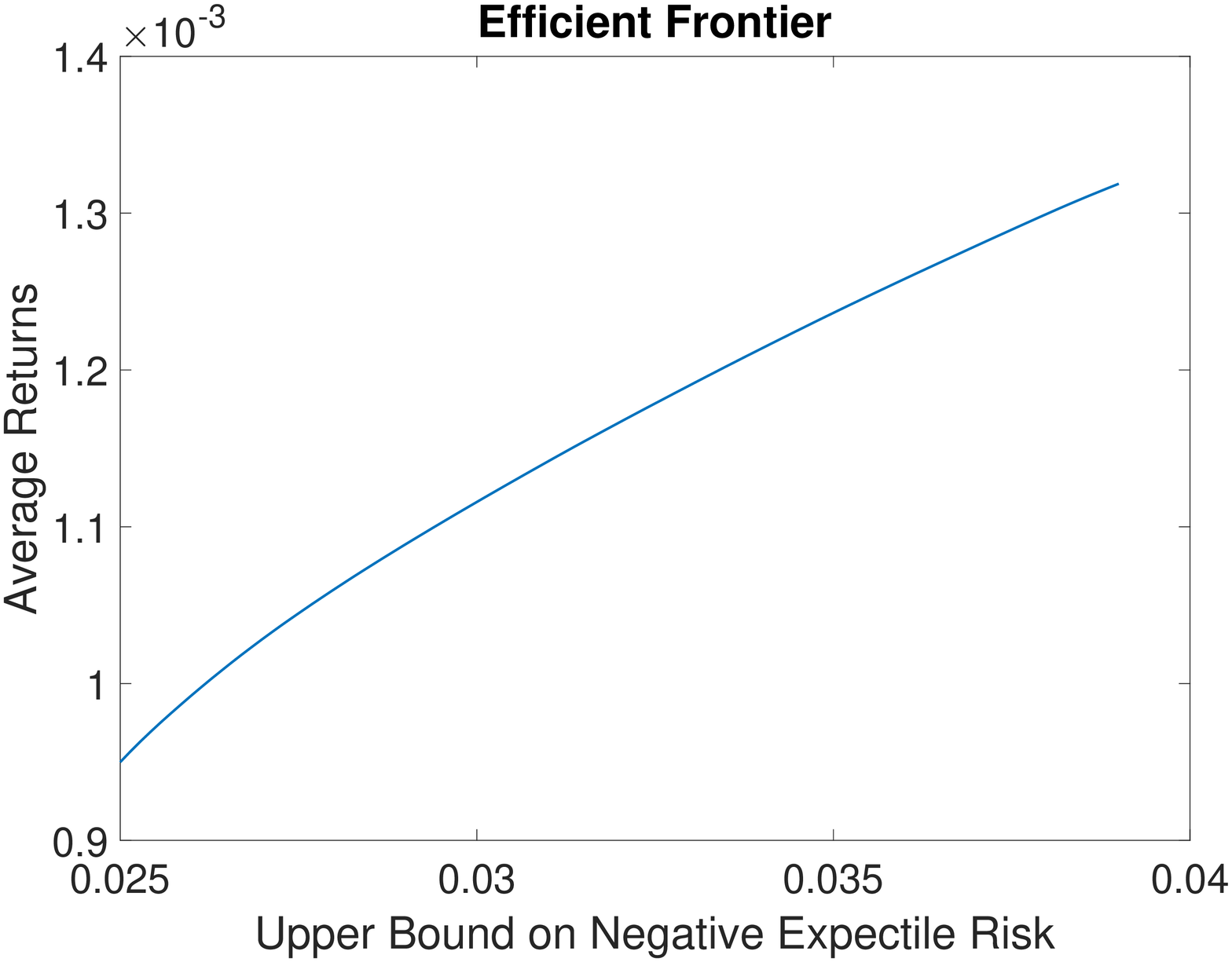}
\end{subfigure}%
\begin{subfigure}{.5\textwidth}
  \centering
  \includegraphics[width=0.9\linewidth]{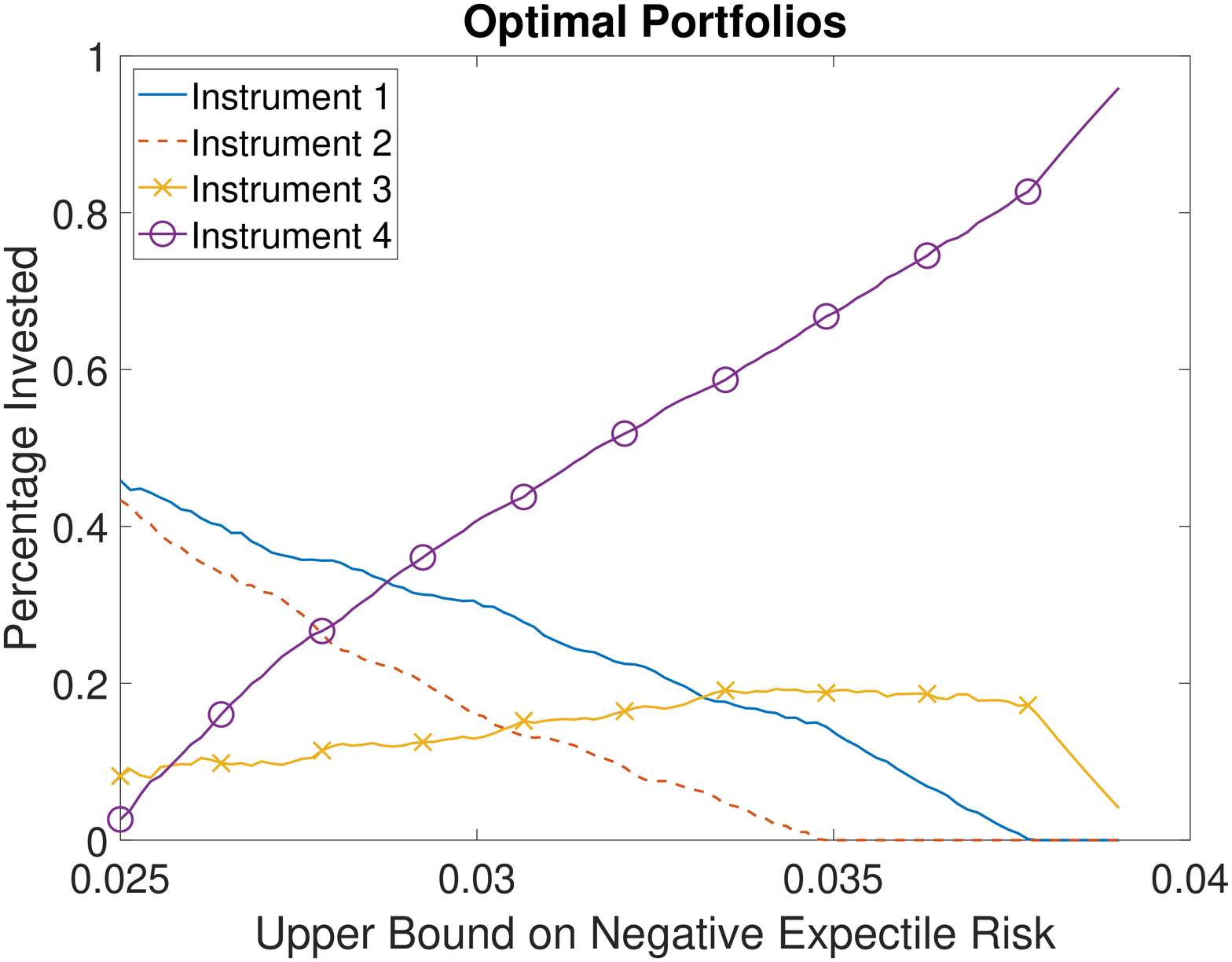}
\end{subfigure}
\caption{The expected returns of the optimal solution for a range of values of $b$, and the evolution of the optimal portfolio as the constraint on negative expectile risk is loosened.}
\label{fig:maxreturn}
\end{figure}

Figure \ref{fig:maxreturn} shows that when the upper bound on negative expectile risk is small, the optimal portfolio is a roughly even mixture of $\xi_1$ and $\xi_2$. However, as the constraint is loosened, the portfolio gradually shifts to being concentrated in $\xi_4$. This indicates that $\xi_4$ has the highest expected returns, while also carrying the most risk.

We also consider the problem of minimizing risk with expected returns bounded from below.
\[
 \min \rho_q(x) \quad \text{ s.t} \quad E[\xi^\T x] \geq r,\ x\geq 0,\ 1^\T x = 1 \;.
\]
 
 The efficient frontier in Figure \ref{fig:minrisk} shows that until we require an average return of more than $9 \times 10^{-4}$, the negative expectile risk of the optimal portfolio is constant. This suggests that in practice a lower bound greater than $9 \times 10^{-4}$ may be reasonable. Figure \ref{fig:minrisk} shows that when the constraint on average return is small, the optimal portfolio is an even mix of $\xi_1$ and $\xi_2$. However, as the constraint on average return increases, the portfolio shifts to the riskier $\xi_4$. \\
 
 \begin{figure}
\centering
\begin{subfigure}{.5\textwidth}
  \centering
  \includegraphics[width=.9\linewidth]{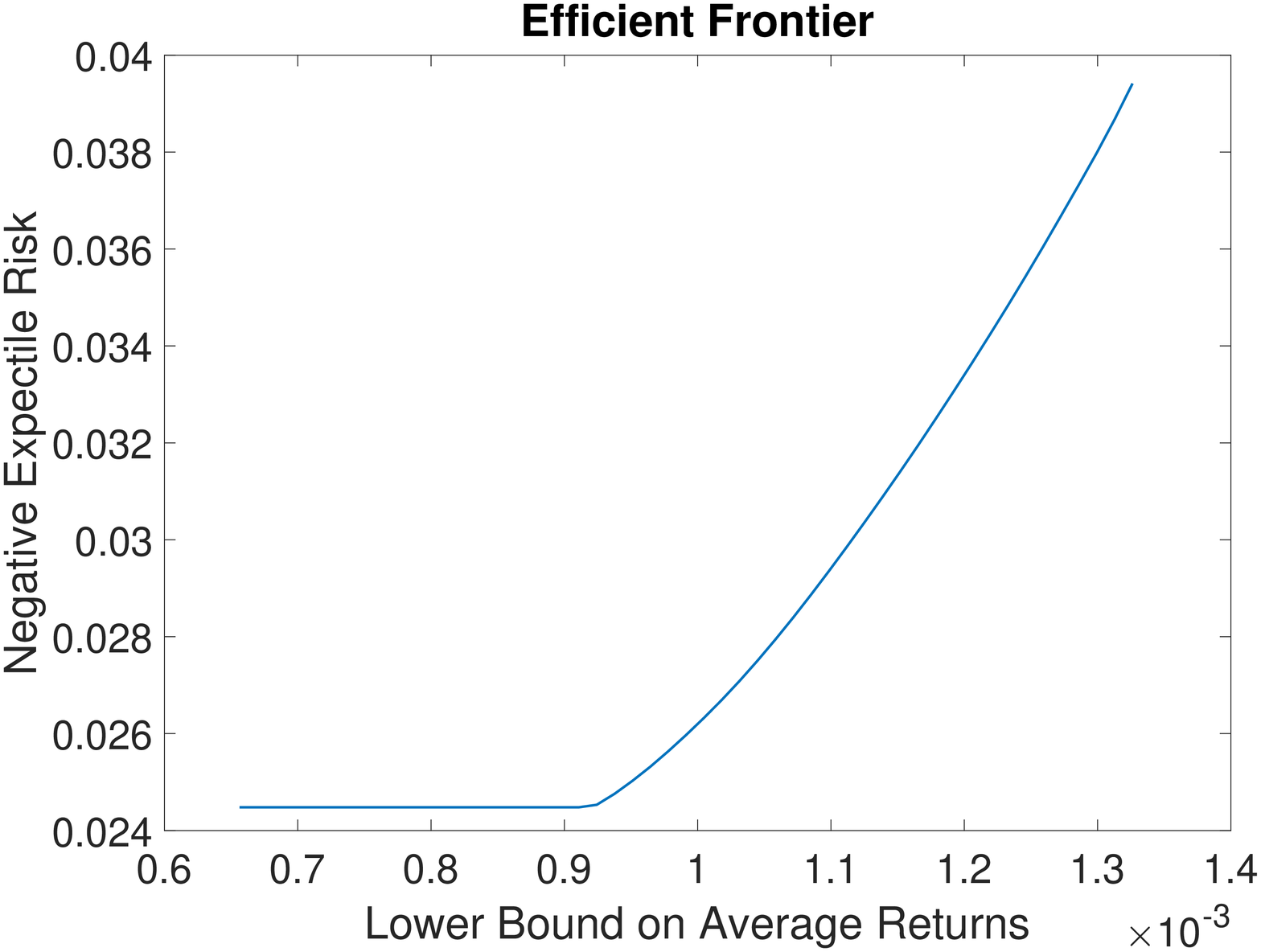}
\end{subfigure}%
\begin{subfigure}{.5\textwidth}
  \centering
  \includegraphics[width=.9\linewidth]{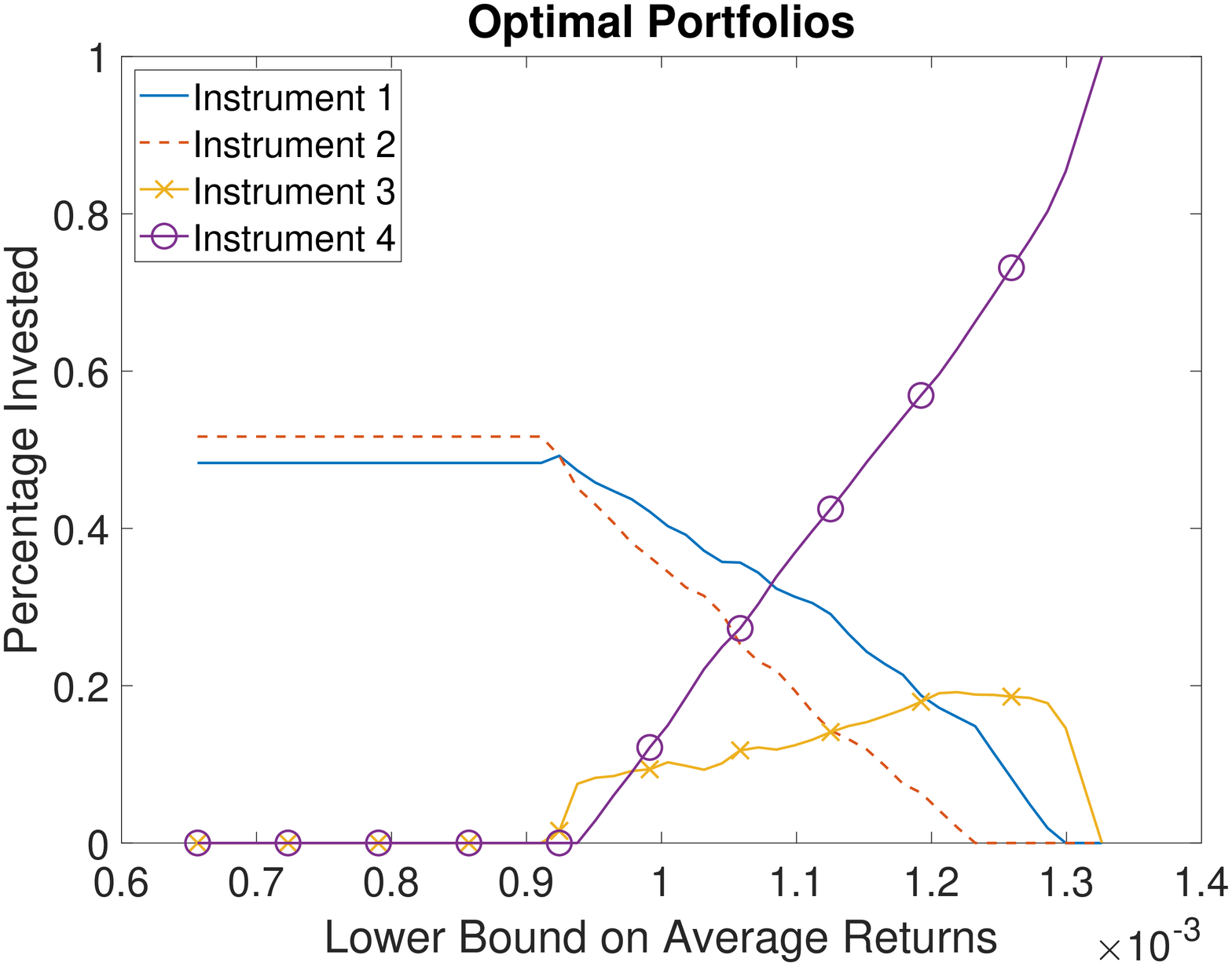}
\end{subfigure}
\caption{The negative expectile risk of the optimal solution for a range of values of $r$, and the evolution of the optimal portfolio as the constraint on average returns increases.}
\label{fig:minrisk}
\end{figure}

  Lastly, we compare our results with those from the  case study ``Basic CVaR Optimization Problem, Beyond Black-Litterman" \cite{casestudybasiccvar} which uses the same data.
 
\[
\min \text{CVaR}_{0.95}(-\xi^T x) \quad \text{ s.t.} \quad E[\xi^\T x] \geq 0.00105,\ x\geq 0,\ 1^\T x = 1 \;.
\]

The optimal solution is given by $x = (0.3669, 0.2574, 0.1304, 0.2454)$, which has an average return of $0.00105$ and $\text{CVaR}_{0.95}(-\xi^T x) = 0.054853$. The optimal solution with the same expected return constraint but minimizing negative expectile is given by $(x = 0.357588, 0.278139, 0.103928, 0.260346)$, which has an average return of $0.00105$ and $\rho_q(x) = 0.027668$.\\
 
 We also conducted experiments using data from the case study ``Omega Portfolio Rebalancing" \cite{casestudyomega}. The first experiment is a numerical test of Corollary \ref{cor:exp=>omega}. We set $z = 19$ such that $q = (1+z)^{-1} = 0.05$ and solved $(P2,\rho_{(1+z)^{-1}},r)$ for values of $r$ between $10^{-5}$ and $3 \times 10^{-4}$. For each $r$ and its corresponding optimal portfolio $x_0$, we then solved $(P2,\phi_{-\rho_{(1+z)^{-1}}(x_0)},r)$ and obtained an optimal portfolio $x_1$. According to Corollary \ref{cor:exp=>omega}, if $z > 1$, an optimal solution $x_0$ of $(P2,\rho_{(1+z)^{-1}},r)$ is also optimal for $(P2,\phi_{-\rho_{(1+z)^{-1}}(x_0)},r)$ as long as $\xi^T x_0$ is non-constant. Indeed, this result is supported by the experiment because for each $r$, the portfolios $x_0$ and $x_1$ were essentially equal. The largest difference in the percentage invested in any instrument between $x_0$ and $x_1$ was $0.017\%$ over all $r$ tested. We repeated the experiment with the dataset with $10,000$ scenarios from the previous case study and obtained similar results. We remark that the linear programming formulation of omega maximization due to \shortciteN{mausser2006optimising} took approximately $10$ seconds with this larger dataset while omega maximization with the convex programming approach afforded by PSG took approximately $0.5$ seconds.
 
 We also used this data to compare negative expectile to CVaR and VaR. For a range of values of $r$ between $10^{-5}$ and $3 \times 10^{-4}$, we solved the following optimization problems for $q = 0.05$ and $\alpha = 0.95$.
 
\[
 \min \rho_q(\xi^T x),\ CVaR_\alpha(-\xi^T x),\ \text{or }VaR_\alpha(-\xi^T x) \quad \text{ s.t} \quad E[\xi^\T x] \geq r,\ x\geq 0,\ 1^\T x = 1.
\]
 
 \begin{figure}[h]
\centering
\begin{subfigure}{.5\textwidth}
  \centering
  \includegraphics[width=.9\linewidth]{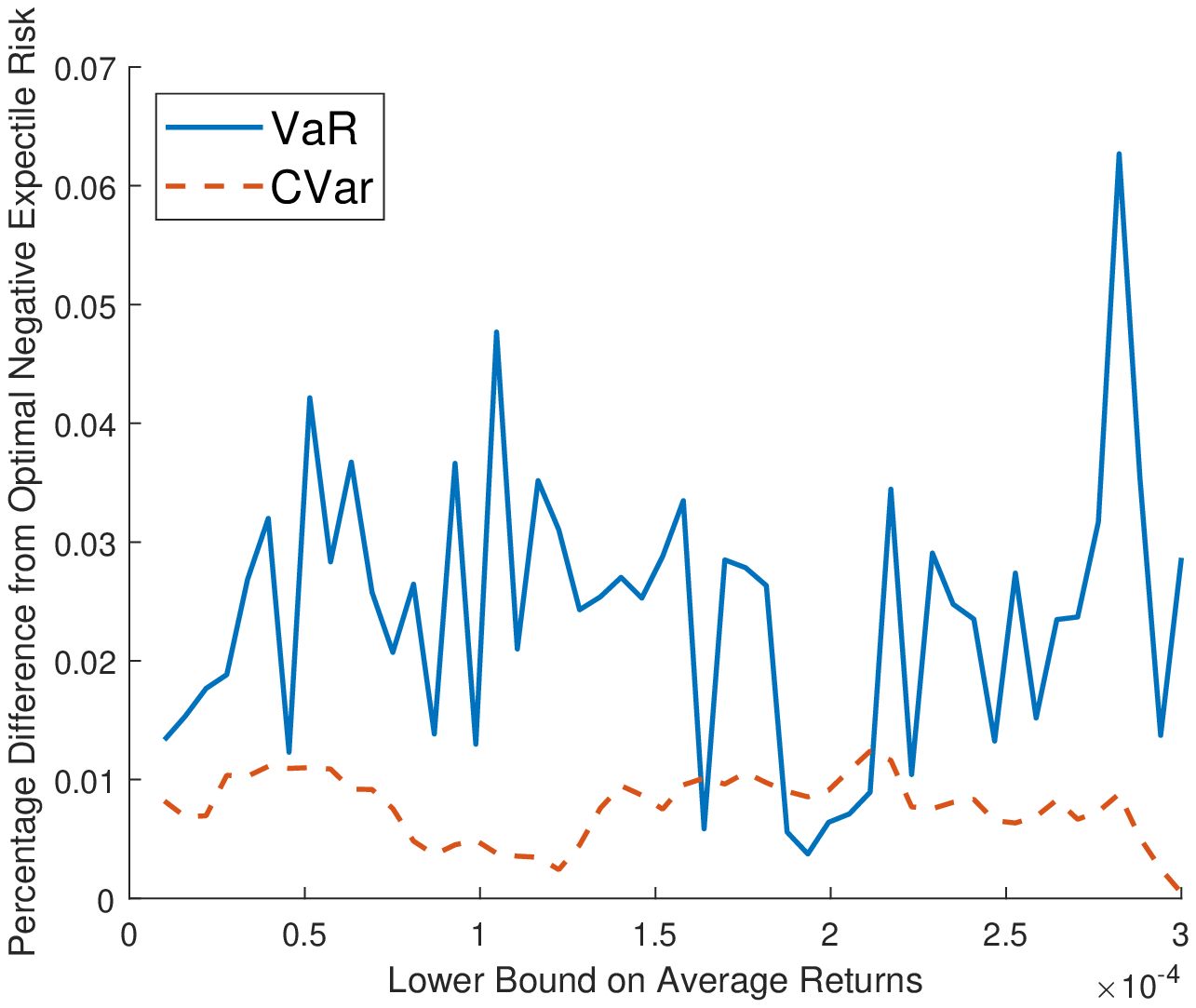}
\end{subfigure}%
\begin{subfigure}{.5\textwidth}
  \centering
  \includegraphics[width=.9\linewidth]{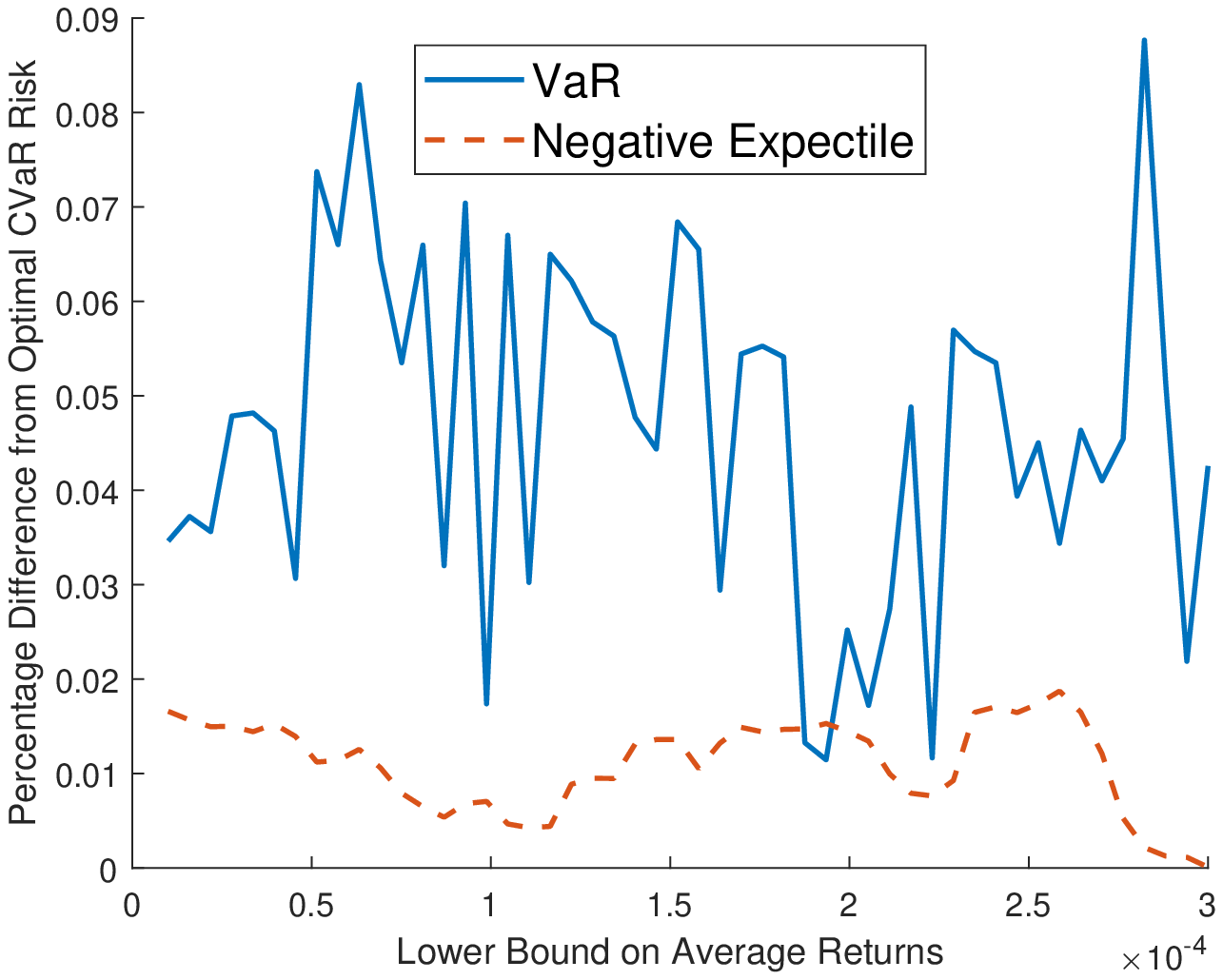}
\end{subfigure}

\begin{subfigure}{.5\textwidth}
  \centering
  \includegraphics[width=.9\linewidth]{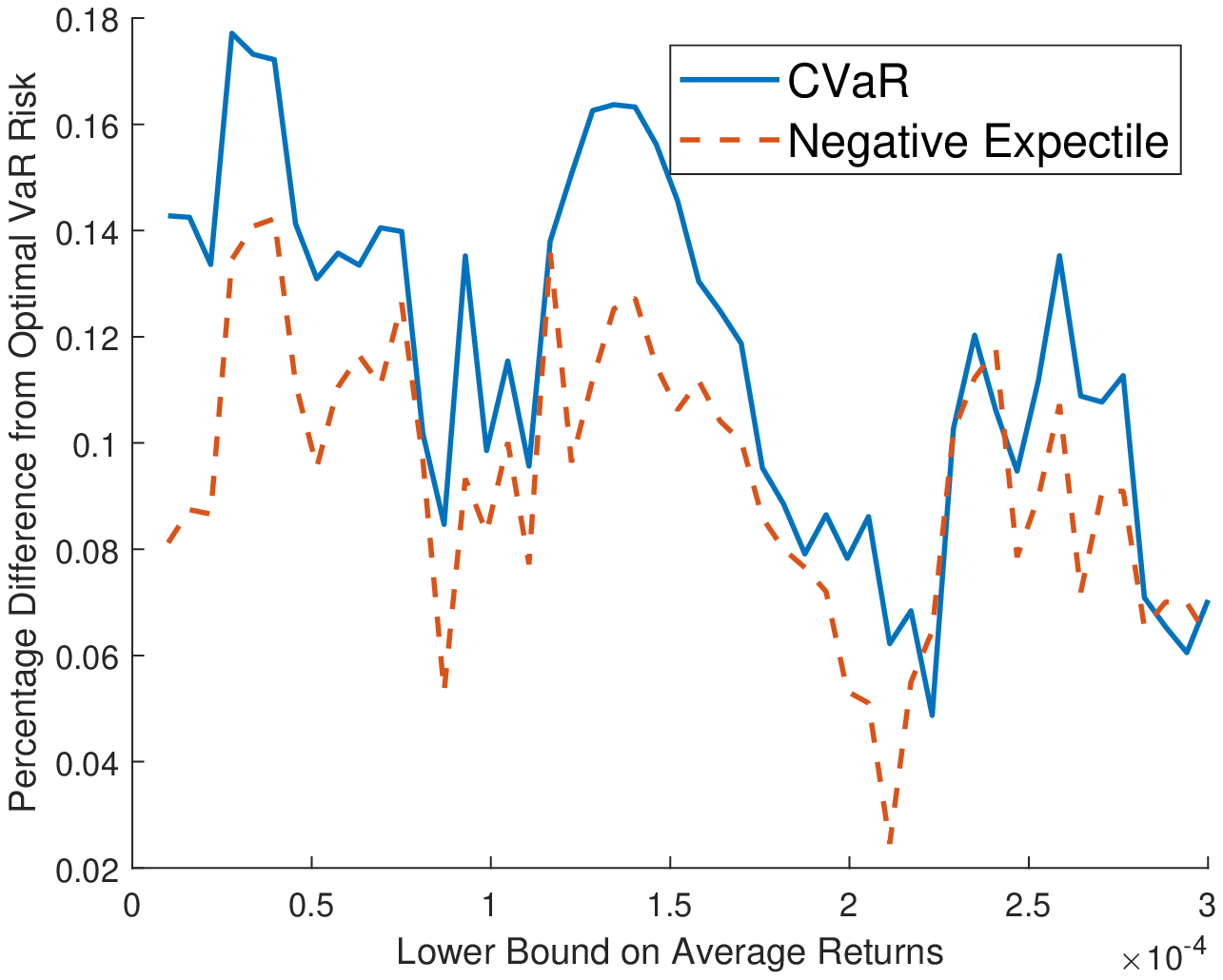}
\end{subfigure}
\caption{The relative difference between negative expectile, CVaR, and VaR of optimal portfolios.}
\label{fig:riskmeasurecomparison}
\end{figure}
 
 For each $r$, we obtained an optimal solution for each risk measure and computed the value of this solution with respect to the other two risk measures. Figure \ref{fig:riskmeasurecomparison} plots the relative differences. For example, the first plot shows the negative expectile risk of the optimal VaR and CVaR portfolios for a range of values of $r$. The plots show that the CVaR of the optimal expectile solution and the negative expectile of the optimal CVaR solution are both nearly optimal while the VaRs of the optimal CVaR and expectile solutions are considerably larger. Hence, this experiment suggests that CVaR and negative expectile may be more similar to each other than either is to VaR.
 
 \section{Conclusion}
 \label{sec:conclusion}
 
This paper considered the optimization problems of maximizing a portfolio's expected returns with an upper bound constraint on negative expectile risk or minimizing a portfolio's negative expectile risk with a lower bound constraint on expected returns. We proved equivalences between both of these problems and omega ratio optimization problems in Section \ref{sec:omega} by using the inverse relationship between expectile and the omega ratio described in Proposition \ref{prop:omegaexpectile}. In order to solve expectile portfolio optimization problems using convex programming in Section \ref{sec:applications}, we derived a subgradient for $\rho_q(x)$, the negative level-$q$ expectile risk of a portfolio $x$. This was done in Section \ref{sec:subgrad} by applying the theory of \shortciteN{rockafellar2006optimality} to the dual representation of negative expectile due to \shortciteN{bellini2014generalized} and again in Section \ref{sec:derivative} by an elementary argument. In Section \ref{sec:applications}, we also conducted a numerical test of an equivalence between expectile and omega ratio optimization and performed a comparison of negative expectile to two other popular risk measures, VaR and CVaR.

\end{document}